\newcommand{\LongVersion}[1]{}
\renewcommand\marginpar[1]{}
\newcommand{\V}{\forall}
\newcommand{\ovl}[1]{\overline{#1}}
\newcommand{\Var}{\mathit{Vars}}
\def\trojkat{\mbox{{\scriptsize$\!\vartriangleleft$}}}
\newcommand{\myend}{\mbox{}\hfill\trojkat}
\newcommand{\edb}{extensional\xspace}
\newcommand{\idb}{intensional\xspace}
\newcommand{\comment}[1]{}
\def\eqref#1{(\ref{#1})}
\newcommand{\tpc}{\ovl{t}} 
\newcommand{\tuple}[1]{(#1)}
\newcommand{\tuples}{\mathit{tuples}}
\newcommand{\tuplePairs}{\mathit{tuple\!\_pairs}}
\newcommand{\unprocessed}{\mathit{unprocessed}}
\newcommand{\unprocessedSubqueries}{\mathit{unprocessed\!\_subqueries}}
\newcommand{\unprocessedSQ}{\mathit{unprocessed\!\_subqueries_2}}
\newcommand{\unprocessedTuples}{\mathit{unprocessed\!\_tuples}}
\newcommand{\atom}{\mathit{atom}}
\newcommand{\subqueries}{\mathit{subqueries}}
\newcommand{\kind}{\mathit{kind}}
\newcommand{\preVars}{\mathit{pre\!\_vars}}
\newcommand{\postVars}{\mathit{post\!\_vars}}
\newcommand{\pred}{\mathit{pred}}
\newcommand{\inp}[1]{\mathit{input\!\_}#1}
\newcommand{\ans}[1]{\mathit{ans\!\_}#1}
\newcommand{\preFilter}{\mathit{pre\!\_filter}}
\newcommand{\postFilter}{\mathit{post\!\_filter}}
\newcommand{\filter}{\mathit{filter}}
\newcommand{\QSQTRE}{QSQTRE}
\title{Query-Subquery Nets}
\author{Linh Anh Nguyen\inst{1} \and Son Thanh Cao\inst{2}}
\institute{
Institute of Informatics, University of Warsaw\\
Banacha 2, 02-097 Warsaw, Poland\\
\email{nguyen@mimuw.edu.pl}
\and 
Faculty of Information Technology, Vinh University\\
182 Le Duan street, Vinh, Nghe An, Vietnam\\
\email{sonct@vinhuni.edu.vn}
}
\authorrunning{L.A. Nguyen and S.T. Cao}	
\begin{document}
\maketitle
\sloppy

\begin{abstract}
We formulate query-subquery nets and use them to create the first framework for developing algorithms for evaluating queries to Horn knowledge bases with the properties that: 
the approach is goal-directed; 
each subquery is processed only once and 
each supplement tuple, if desired, is transferred only once;
operations are done set-at-a-time;  
and any control strategy can be used. 
Our intention is to increase efficiency of query processing by eliminating redundant computation, increasing flexibility and reducing the number of accesses to the secondary storage. 
The framework forms a generic evaluation method called QSQN. 
To deal with function symbols, we use a term-depth bound for atoms and substitutions occurring in the computation and propose to use iterative deepening search which iteratively increases the term-depth bound. 
We prove soundness and completeness of our generic evaluation method and show that, when the term-depth bound is fixed, the method has PTIME data complexity.
We also present how tail recursion elimination can be incorporated into our framework and propose two exemplary control strategies, one is to reduce the number of accesses to the secondary storage, while the other is depth-first search. 

\medskip

\noindent\textbf{Keywords:}
query processing, Datalog, Horn knowledge bases, QSQ, QSQR, QSQN, \QSQTRE, magic-set transformation
\end{abstract}


\section{Introduction}
\label{section: intro}

Horn knowledge bases are definite logic programs, which are usually so big that either they cannot be totally loaded into the computer memory or evaluations for them cannot be done totally in the computer memory. Thus, in contrast to logic programming, for Horn knowledge bases efficient access to the secondary storage is an important aspect. Horn knowledge bases can be treated as extensions of Datalog deductive databases without the range-restrictedness and function-free conditions. 

This work studies query processing for Horn knowledge bases. It is a continuation of Madali{\'n}ska-Bugaj and Nguyen's work~\cite{ToCL455}. As argued in~\cite{ToCL455}, the Horn fragment of first-order logic plays an important role in knowledge representation and reasoning. The QSQN (query-subquery net) evaluation method provided in the current paper is essentially different from the QSQR (query-subquery recursive) method of~\cite{ToCL455}. However, some introductory and preliminary texts are borrowed from~\cite{ToCL455}.

An efficient method for evaluating queries to Horn knowledge bases should:
\begin{itemize}
\item be goal-directed, i.e.\ the computation should be closely related to the given goal
\item be set-oriented (instead of tuple-oriented) in order to reduce the number of accesses to the secondary storage 
\item do no redundant computation (or do it as less as possible).
\end{itemize} 

As discussed in~\cite{ToCL455}, to develop evaluation procedures for Horn knowledge bases one can either adapt tabled SLD-resolution systems of logic programming to reduce the number of accesses to the secondary storage or generalize evaluation methods of Datalog to deal with non-range-restricted definite logic programs and goals that may contain function symbols. 

Tabled SLD-resolution systems like OLDT \cite{OLDT}, SLD-AL~\cite{Vie87,Vie89}, linear tabulated resolution~\cite{LinTab1,LinTab2} are efficient computational procedures for logic programming without redundant recomputations, but they are not directly applicable to Horn knowledge bases to obtain efficient evaluation engines because they are not set-oriented (set-at-a-time). In particular, the suspension-resumption mechanism and the stack-wise representation as well as the ``global optimizations of SLD-AL'' are all tuple-oriented (tuple-at-a-time). Data structures for them are too complex so that they must be dropped if one wants to convert the methods to efficient set-oriented ones. The try of converting XSB~\cite{XSB,SagonasSW94} (a state-of-the-art implementation of OLDT) to Breadth-First XSB \cite{FreireSW97} as a set-oriented engine \cite{FreireSW97} for Horn knowledge bases removes essential features of XSB. Besides, as shown in Example~\ref{example1}, the breadth-first approach is not always efficient.

As well-known evaluation methods for Datalog deductive databases, there are the top-down methods QSQR~\cite{Vie86}, QoSaQ~\cite{Vie89}, QSQ~\cite{Vie86,Vie89,AHV95} and the bottom-up method based on magic-set transformation and seminaive evaluation~\cite{Magic1,RLK86,AHV95}. 
As the QSQ approach (including QSQR and QoSaQ) is based on SLD-resolution and the magic-set technique simulates QSQ, all of the mentioned methods are goal-directed. 

The first version of the QSQR (query-subquery recursive) evaluation method was formulated by Vieille in~\cite{Vie86} for Datalog deductive databases. It is set-oriented and uses a tabulation technique. That version is incomplete~\cite{Vie89,Nejdl87}. As pointed out by Mohamed Yahya~\cite{ToCL455}, the version given in the book~\cite{AHV95} by Abiteboul et al.\ is also incomplete. In~\cite{ToCL455}, Madali{\'n}ska-Bugaj and Nguyen corrected and generalized the method for Horn knowledge bases. The correction depends on clearing global $input\!\_$ relations for each iteration of the main loop. As observed by Vieille~\cite{Vie89}, the QSQR approach is like iterative deepening search. It allows redundant recomputations (see~\cite[Remark~3.2]{ToCL455}). 

The QoSaQ evaluation method~\cite{Vie89} is Vieille's adaptation of SLD-AL resolution for Datalog deductive databases. This evaluation method can be implemented as a set-oriented procedure, but Vieille stated that {\em ``We would like, however, to go even further and to claim that the practical interest of our approach lies in its one-inference-at-a-time basis, as opposed to having a set-theoretic basis. First, this tuple-based computational model permits a fine analysis of the duplicate elimination issue. \ldots''}~\cite[page 5]{Vie89}. Moreover, the specific techniques of QoSaQ like ``instantiation pattern'', ``rule compilation'', ``projection'' are heavily based on the range-restrictedness and function-free conditions. 

The magic-set technique~\cite{Magic1,RLK86} for Datalog deductive databases simulates the top-down QSQR evaluation by rewriting a given query to another equivalent one that when evaluated using a bottom-up technique (e.g.\ the seminaive evaluation) produces only facts produced by the QSQR evaluation. Some authors have extended the magic-set technique for Horn knowledge bases \cite{ramakrishnan92efficient,FreireSW97}. The bottom-up techniques usually use breadth-first search, and as shown in Example~\ref{example1}, are not always efficient. The magic-set transformation does not help for the case of that example. 

\begin{example} \label{example1}
The order of program clauses and the order of atoms in the bodies of program clauses may be essential, e.g., when the positive logic program defining intensional predicates is specified using the Prolog programming style. In such cases, the top-down depth-first approach may be much more efficient than the breadth-first approaches (including the one based on magic-set transformation and bottom-up seminaive evaluation). Here is such an example, in which $x$, $y$, $z$ denote variables and $a_i$, $b_{i,j}$ denote constant symbols: 
\begin{itemize}
\item the positive logic program (for defining intensional predicates $p$, $q_1$ and $q_2$):
  \[
  \begin{array}{l}
  p \gets q_1(a_0,a_{1000}) \\
  p \gets q_2(a_0,a_{1000}) \\[1ex]
  q_1(x,y) \gets r_1(x,y) \\
  q_1(x,y) \gets r_1(x,z), q_1(z,y) \\[1ex]
  q_2(x,y) \gets r_2(x,y) \\
  q_2(x,y) \gets r_2(x,z), q_2(z,y) 
  \end{array}
  \]

\item the extensional instance (for specifying extensional predicates $r_1$ and $r_2$): 
  \begin{eqnarray*}
  I(r_1) & = & \{(a_i, a_{i+1}) \mid 0 \leq i < 1000\} \\[1ex]
  I(r_2) & = & \{(a_0, b_{1,j}) \mid 1 \leq j \leq 1000\}\, \cup \\
         &   & \{(b_{i,j},b_{i+1,j}) \mid 1 \leq i < 999 \textrm{ and } 1 \leq j \leq 1000\}\, \cup \\
         &   & \{(b_{999,j},a_{1000}) \mid 1 \leq j \leq 1000\}
  \end{eqnarray*}
i.e.,
\begin{center}
\begin{tabular}{l}
\xymatrix{
a_0
\ar[d]_{r_1}
\ar[dr]^{r_2}
\ar[drrr]^{r_2}
\\
a_1
\ar[d]_{r_1}
&
b_{1,1}
\ar[d]^{r_2}
&
\ldots
& 
b_{1,1000}
\ar[d]^{r_2}
\\
a_2
&
b_{2,1}
&
\ldots
& 
b_{2,1000}
\\
\vdots
&
\vdots
&
\vdots
&
\vdots
\\
a_{999}
\ar[d]_{r_1}
&
b_{999,1}
\ar[dl]^{r_2}
&
\ldots
& 
b_{999,1000}
\ar[dlll]^{r_2}
\\
a_{1000}
}
\\
\ 
\end{tabular}
\end{center}
\item the goal: $\gets p$.
\end{itemize}

Our postulate is that the breadth-first approaches (including the evaluation method based on magic-set transformation and bottom-up seminaive evaluation) are too inflexible and not always efficient. Of course, depth-first search is not always good either. 
\myend
\end{example}

The QSQ (query-subquery) approach for Datalog queries, as presented in~\cite{AHV95}, originates from the QSQR method but allows a variety of control strategies. The QSQ framework~\cite{Vie86,AHV95} uses adornments to simulate SLD-resolution in pushing constant symbols from goals to subgoals. The annotated version of QSQ also uses annotations to simulate SLD-resolution in pushing repeats of variables from goals to subgoals (see~\cite{AHV95}).     

In this paper we generalize the QSQ approach for Horn knowledge bases. 
We formulate query-subquery nets and use them to create the first framework for developing algorithms for evaluating queries to Horn knowledge bases with the following properties: 
\begin{itemize}
\item the approach is goal-directed 
\item each subquery is processed only once
\item each supplement tuple, if desired, is transferred only once 
\item operations are done set-at-a-time 
\item any control strategy can be used. 
\end{itemize}
Our intention is to increase efficiency of query processing by eliminating redundant computation, increasing flexibility and reducing the number of accesses to the secondary storage. 
The framework forms a generic evaluation method called QSQN. 
Similarly to~\cite{ToCL455} but in contrast to the QSQ framework for Datalog queries~\cite{AHV95}, it does not use adornments and annotations (but has the effects of the annotated version). 
To deal with function symbols, we use a term-depth bound for atoms and substitutions occurring in the computation and propose to use iterative deepening search which iteratively increases the term-depth bound. 
We prove soundness and completeness of our generic evaluation method and show that, when the term-depth bound is fixed, the method has PTIME data complexity.
We also present how tail recursion elimination~\cite{Ross96} can be incorporated into our framework and propose two exemplary control strategies, one is to reduce the number of accesses to the secondary storage, while the other is depth-first search. 

The rest of this paper is structured as follows. 
In Section~\ref{section: prel} we recall some notions of first-order logic, logic programming, and Horn knowledge bases. In Section~\ref{section: QSQN} we present our QSQN evaluation method for Horn knowledge bases. We prove its soundness and completeness in Section~\ref{section: sound-comp} and estimate its data complexity in Section~\ref{section: complexity}. We consider tail recursion elimination in Section~\ref{section: TRE} and propose exemplary control strategies for our method in Section~\ref{section: strategies}. Concluding remarks are given in Section~\ref{section: conc}. 


\section{Preliminaries}
\label{section: prel}

First-order logic is considered in this work and we assume that the reader is familiar with it. We recall only the most important definitions for our work and refer the reader to \cite{Lloyd87,Apt97Book} for further reading. 

A signature for first-order logic consists of constant symbols, function symbols, and predicate symbols. Terms and formulas over a fixed signature are defined using the symbols of the signature and variables in the usual way. An {\em atom} is a formula of the form $p(t_1,\ldots,t_n)$, where $p$ is an $n$-ary predicate and $t_1,\ldots,t_n$ are terms. 
An {\em expression} is either a term, a tuple of terms, a formula without quantifiers or a list of formulas without quantifiers. A {\em simple expression} is either a term or an atom. 
The {\em term-depth of an expression} is the maximal nesting depth of function symbols occurring in that expression. 

\subsection{Substitution and Unification}

A {\em substitution} is a finite set $\theta = \{x_1/t_1, \ldots, x_k/t_k\}$, where $x_1, \ldots, x_k$ are pairwise distinct variables, $t_1, \ldots, t_k$ are terms, and $t_i \neq x_i$ for all $1 \leq i \leq k$. The set $dom(\theta) = \{x_1,\ldots,x_k\}$ is called the {\em domain} of $\theta$, while the set $range(\theta) = \{t_1,\ldots,t_k\}$ is called the {\em range} of $\theta$. 
By $\varepsilon$ we denote the {\em empty substitution}. 
The restriction of a substitution $\theta$ to a set $X$ of variables is the substitution $\theta_{|X} = \{(x/t) \in \theta \mid x \in X\}$. 
The {\em term-depth of a substitution} is the maximal nesting depth of function symbols occurring in that substitution. 

Let $\theta = \{x_1/t_1, \ldots, x_k/t_k\}$ be a substitution and $E$ be an expression. Then $E\theta$, the {\em instance} of $E$ by $\theta$, is the expression obtained from $E$ by simultaneously replacing all occurrences of the variable $x_i$ in $E$ by the term $t_i$, for $1 \leq i \leq k$.

Let $\theta = \{x_1/t_1, \ldots, x_k/t_k\}$  and $\delta = \{y_1/s_1, \ldots, y_h/s_h\}$ be substitutions (where $x_1,\ldots,x_k$ are pairwise distinct variables, and $y_1,\ldots,y_h$ are also pairwise distinct variables). Then the {\em composition} $\theta\delta$ of $\theta$ and $\delta$ is the substitution obtained from the sequence $\{x_1/(t_1\delta),\ldots,x_k/(t_k\delta), y_1/s_1,\ldots, y_h/s_h\}$ by deleting any binding $x_i/(t_i\delta)$ for which $x_i = (t_i\delta)$ and deleting any binding $y_j/s_j$ for which $y_j \in \{x_1,\ldots,x_k\}$.

A substitution $\theta$ is {\em idempotent} if $\theta\theta = \theta$. It is known that $\theta = \{x_1/t_1, \ldots, x_k/t_k\}$ is idempotent if none of $x_1,\ldots,x_k$ occurs in any $t_1,\ldots,t_k$. 

If $\theta$ and $\delta$ are substitutions such that $\theta\delta = \delta\theta = \varepsilon$, then we call them {\em renaming substitutions}. 
We say that an expression $E$ is a {\em variant} of an expression $E'$ if there exist substitutions $\theta$ and $\gamma$ such that $E = E'\theta$ and $E' = E\gamma$.

A substitution $\theta$ is {\em more general} than a substitution $\delta$ if there exists a substitution $\gamma$ such that $\delta = \theta\gamma$. Note that according to this definition, $\theta$ is more general than itself.

Let $\Gamma$ be a set of simple expressions. A substitution $\theta$ is called a {\em unifier} for $\Gamma$ if $\Gamma\theta$ is a singleton. If $\Gamma\theta = \{\varphi\}$ then we say that $\theta$ unifies $\Gamma$ (into $\varphi$). A unifier $\theta$ for $\Gamma$ is called a {\em most general unifier} (mgu) for $\Gamma$ if $\theta$ is more general than every unifier of $\Gamma$.

There is an effective algorithm, called the {\em unification algorithm}, for checking whether a set $\Gamma$ of simple expressions is unifiable (i.e.~has a unifier) and computing an idempotent mgu for $\Gamma$ if $\Gamma$ is unifiable (see, e.g., \cite{Lloyd87}).


If $E$ is an expression or a substitution then by $\Var(E)$ we denote the set of variables occurring in~$E$.
If $\varphi$ is a formula then by $\V (\varphi)$ we denote the {\em universal closure} of $\varphi$, which is the formula obtained by adding a universal quantifier for every variable having a free occurrence in $\varphi$.

\subsection{Positive Logic Programs and SLD-Resolution}

A (positive or definite) {\em program clause} is a formula of the form $\V (A \lor \lnot B_1 \lor \ldots \lor \lnot B_k)$ with $k \geq 0$, written as \mbox{$A \gets B_1, \ldots, B_k$}, where $A$, $B_1$, \ldots, $B_k$ are atoms. $A$ is called the {\em head}, and $(B_1, \ldots, B_k)$ the {\em body} of the program clause. If $p$ is the predicate of $A$ then the program clause is called a program clause defining~$p$.

A {\em positive} (or {\em definite}) {\em logic program} is a finite set of program clauses.

A {\em goal} (also called a {\em negative clause}) is a formula of the form $\V (\lnot B_1 \lor \ldots \lor \lnot B_k)$, written as $\gets B_1, \ldots, B_k$, where $B_1, \ldots, B_k$ are atoms. If $k = 1$ then the goal is called a {\em unary goal}. If $k = 0$ then the goal stands for falsity and is called the {\em empty goal} (or the {\em empty clause}) and denoted by~$\Box$.

If $P$ is a positive logic program and $G =\ \gets B_1,\ldots,B_k$ is a goal, then $\theta$ is called a {\em correct answer} for $P \cup \{G\}$ if $P \models \V((B_1 \land\ldots\land B_k)\theta)$.

We now give definitions for SLD-resolution.

A goal $G'$ is {\em derived} from a goal $G =\ \gets A_1,\ldots,A_i,
 \ldots, A_k$ and a program clause $\varphi = (A \gets B_1,\ldots, B_h)$ using $A_i$ as the {\em selected atom} and $\theta$ as the most general unifier (mgu) if $\theta$ is an mgu for $A_i$ and $A$, and $G' =\ \gets (A_1,\ldots,A_{i-1},B_1,\ldots,B_h,A_{i+1},\ldots,A_k)\theta$. We call $G'$ a {\em resolvent} of $G$ and~$\varphi$. If $i = 1$ then we say that $G'$ is derived from $G$ and $\varphi$ using {\em the leftmost selection function}.

Let $P$ be a positive logic program and $G$ be a goal.

An {\em SLD-derivation} from $P \cup \{G\}$ consists of a (finite or infinite) sequence $G_0 = G$, $G_1$, $G_2$, \ldots of goals, a sequence $\varphi_1, \varphi_2, \ldots$ of variants of program clauses of $P$ and a sequence $\theta_1, \theta_2, \ldots$ of mgu's such that each $G_{i+1}$ is derived from $G_i$ and $\varphi_{i+1}$ using~$\theta_{i+1}$.
Each $\varphi_i$ is a suitable variant of the corresponding
program clause. That is, $\varphi_i$ does not have any variables
which already appear in the derivation up to $G_{i-1}$. Each
program clause variant $\varphi_i$ is called an {\em input program
clause}.

An {\em SLD-refutation} of $P \cup \{G\}$ is a finite SLD-derivation from $P \cup \{G\}$ which has the empty clause as the last goal in the derivation.

A {\em computed answer} $\theta$ for $P \cup \{G\}$ is the substitution obtained by restricting the composition $\theta_1\ldots\theta_n$ to the variables of $G$, where $\theta_1,\ldots,\theta_n$ is the sequence of mgu's occurring in an SLD-refutation of $P \cup \{G\}$.

\begin{theorem}[Soundness and Completeness of SLD-Resolution \cite{Clark,Stark89}]
\label{theorem: SLD soundness and completeness}
Let $P$ be a positive logic program and $G$ be a goal.
Then every computed answer for $P \cup \{G\}$ is a correct answer for $P \cup \{G\}$.
Conversely, for every correct answer $\theta$ for $P \cup \{G\}$, using any selection function there exists a computed answer $\delta$ for $P \cup \{G\}$ such that $G\theta = G\delta\gamma$ for some substitution~$\gamma$.
\myend
\end{theorem}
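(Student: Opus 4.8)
The plan is to treat the two halves separately, dispatching soundness by a short induction and concentrating the real work on completeness.

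For soundness I would induct on the length $n$ of an SLD-refutation $G_0 = G, G_1, \ldots, G_n = \Box$ with input clauses $\varphi_1,\ldots,\varphi_n$ and mgu's $\theta_1,\ldots,\theta_n$. Writing $G = \gets B_1,\ldots,B_k$, the statement to prove is $P \models \V((B_1 \land \ldots \land B_k)\theta_1\cdots\theta_n)$; restricting $\theta_1\cdots\theta_n$ to $\Var(G)$ then shows that the computed answer is correct. In the inductive step the first step resolves the selected atom, say $B_i$, against $\varphi_1 = (A \gets C_1,\ldots,C_h)$, so that $A\theta_1 = B_i\theta_1$ and $G_1$ carries $C_1\theta_1,\ldots,C_h\theta_1$ in place of $B_i\theta_1$. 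The induction hypothesis for the refutation of $P \cup \{G_1\}$ of length $n-1$ gives that the conjunction of all atoms of $G_1$, instantiated further by $\theta_2\cdots\theta_n$, is a consequence of $P$; in particular so is $(C_1 \land \ldots \land C_h)\theta_1\cdots\theta_n$. Since every instance of the program clause $\varphi_1$ is itself a logical consequence of $P$, this yields $P \models \V(B_i\theta_1\cdots\theta_n)$, and combined with the hypothesis for the remaining atoms it closes the induction. The only care needed is the routine fact that instances of logical consequences are again logical consequences.

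Completeness is the hard direction, and I would assemble it from two ingredients. The first is ground completeness: I would characterize the least Herbrand model $M_P$ as the least fixpoint $T_P \uparrow \omega$ of the immediate-consequence operator $T_P$, and prove by induction on $n$ that each ground atom in $T_P \uparrow n$ admits an SLD-refutation of $P \cup \{\gets A\}$; splicing such refutations together extends this to conjunctions of ground atoms. The second is a lifting lemma, transporting an SLD-refutation of an instantiated goal $P \cup \{G\theta\}$ back to one of $P \cup \{G\}$ whose computed answer $\delta$ satisfies $G\theta = G\delta\gamma$ for a suitable $\gamma$. To run the argument, from the hypothesis $P \models \V((B_1 \land \ldots \land B_k)\theta)$ I would invoke Herbrand's theorem to obtain a ground instance $(B_1 \land \ldots \land B_k)\theta\sigma$ all of whose conjuncts lie in $M_P$; ground completeness yields a refutation of this ground goal, and the lifting lemma produces the desired computed answer $\delta$ with $G\theta = G\delta\gamma$.

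Finally, the clause ``using any selection function'' forces independence of the computation rule, which I would secure through the switching lemma: reordering which atom is selected changes neither the existence nor the length of a refutation, and alters the computed answer only up to renaming. I expect the main obstacle to be precisely these variable-handling steps --- the lifting lemma and the switching lemma --- where one must track compositions of mgu's against renaming substitutions and verify that the ``more general than'' relation $G\theta = G\delta\gamma$ survives; by contrast, soundness and the ground fixpoint computation are essentially mechanical.
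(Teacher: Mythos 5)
The first thing to note is that the paper does not prove this theorem at all: it is imported from the literature (\cite{Clark,Stark89}) and used as a black box, so your proposal can only be measured against the classical proof it reconstructs. Your outline is indeed that standard route (soundness by induction on the length of the refutation; completeness via the least Herbrand model $M_P = T_P \uparrow \omega$, ground completeness, a lifting lemma, and the switching lemma to secure the ``any selection function'' clause), and the soundness half is complete and correct as sketched.

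The completeness half, however, has one concrete gap at its final step. From $P \models \V((B_1 \land \ldots \land B_k)\theta)$ and an arbitrary Herbrand grounding $\sigma$, ground completeness plus lifting applied to the refutation of $G\theta\sigma$ yields a computed answer $\delta$ with $G\theta\sigma = G\delta\gamma$ --- \emph{not} $G\theta = G\delta\gamma$ as the theorem requires. If $\sigma$ identifies distinct variables of $G\theta$, or sends them to terms that happen to match a more specific program clause, nothing in your argument prevents $\delta$ from being more general than $\theta\sigma$ yet incomparable with $\theta$ itself. The missing idea is the fresh-constants device of the classical proof: replace the distinct variables of $G\theta$ by distinct new constants occurring nowhere in $P$, $G$ or $\theta$; because these constants appear in no input clause, the variables can be textually substituted back throughout the resulting refutation, turning it into a refutation of $G\theta$ whose mgu's act trivially on those variables, and only then does lifting from $G\theta$ to $G$ give $\theta = \delta\gamma$ on $\Var(G)$, hence $G\theta = G\delta\gamma$. (Equivalently: first prove the theorem for ground correct answers, then reduce to that case.) Two smaller remarks: your appeal to Herbrand's theorem is unnecessary --- it suffices that $M_P$ is a model of $P$, so every ground instance of $(B_1 \land \ldots \land B_k)\theta$ has its conjuncts in $M_P$; and, in the spirit of this very paper, note that right after Lemma~\ref{lifting lemma} the authors warn that the Lifting Lemma as stated in \cite{Lloyd87} is inaccurate because it omits the requirement that the variables of the input clauses be disjoint from those of $G$ and $\theta$ --- the lifting lemma you invoke must carry that side condition, precisely in the ``variable-handling'' step you flagged as delicate. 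Your use of the switching lemma is the right device and is unproblematic, since it alters computed answers only up to variants, which the theorem's up-to-instance conclusion absorbs.
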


We will use also the following well-known lemmas:

\begin{lemma}[Lifting Lemma]
\label{lifting lemma}
Let $P$ be a positive logic program, $G$ be a goal, $\theta$ be a substitution, and $l$ be a natural number. Suppose there exists an SLD-refutation of $P \cup \{G\theta\}$ using mgu's $\theta_1,\ldots,\theta_n$ such that the variables of the input program clauses are distinct from the variables in $G$ and $\theta$ and the term-depths of the goals are bounded by~$l$. Then there exist a substitution $\gamma$ and an SLD-refutation of $P \cup \{G\}$ using the same sequence of input program clauses, the same selected atoms and mgu's $\theta'_1,\ldots,\theta'_n$ such that the term-depths of the goals are bounded by $l$ and $\theta\theta_1\ldots\theta_n = \theta'_1\ldots\theta'_n\gamma$.
\myend
\end{lemma}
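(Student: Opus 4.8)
The plan is to proceed by induction on the length $n$ of the given SLD-refutation $G_0 = G\theta, G_1, \ldots, G_n = \Box$ of $P \cup \{G\theta\}$, strengthening the induction hypothesis so that it additionally records, for each $i$, that $G_i$ is an instance of the corresponding goal $G'_i$ of the lifted refutation; this instance relationship is precisely what is needed to transport the term-depth bound. Concretely I would construct the lifted refutation $G'_0 = G, G'_1, \ldots, G'_n$ together with substitutions $\delta_0 = \theta, \delta_1, \ldots, \delta_n$ satisfying $G_i = G'_i\delta_i$ and $\theta\theta_1\cdots\theta_i = \theta'_1\cdots\theta'_i\,\delta_i$, and then take $\gamma = \delta_n$.

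The heart of the argument is a single-step lifting claim: suppose a goal $K$ is derived from $H\delta$ and an input clause $\varphi = (A \gets B_1,\ldots,B_m)$ using the atom at position $j$ as selected atom and $\sigma$ as mgu, where $\varphi$ shares no variable with $H$ or $\delta$. Then there are an mgu $\sigma'$ of the $j$-th atom of $H$ with $A$ and a substitution $\delta^*$ such that the goal $H'$ derived from $H$ and $\varphi$ at position $j$ with mgu $\sigma'$ satisfies $\delta\sigma = \sigma'\delta^*$ and $K = H'\delta^*$. To prove this I would use the disjointness hypothesis to observe $A\delta = A$ and $B_l\delta = B_l$, so that $\sigma$, which unifies the $j$-th atom of $H\delta$ with $A = A\delta$, turns $\delta\sigma$ into a unifier of the $j$-th atom of $H$ with $A$; since $\sigma'$ is an mgu of that pair, $\delta\sigma = \sigma'\delta^*$ for some $\delta^*$. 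The same disjointness then yields $K = H'\delta^*$ by a direct computation, since $\delta$ acts trivially on the body atoms $B_l$ and commutes past the remaining atoms as required.

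Instantiating this claim at each step with $H = G'_i$, $\delta = \delta_i$, $\sigma = \theta_{i+1}$ produces $G'_{i+1}$, $\theta'_{i+1}$ and $\delta_{i+1} = \delta^*$, and hence the whole lifted derivation with the same input clauses and selected atoms. Telescoping the identities $\delta_i\theta_{i+1} = \theta'_{i+1}\delta_{i+1}$ (with $\delta_0 = \theta$) gives $\theta\theta_1\cdots\theta_n = \theta'_1\cdots\theta'_n\,\delta_n$, which is the required composition identity with $\gamma = \delta_n$. Since $G_n = \Box$ is an instance of $G'_n$ and only $\Box$ is an instance of $\Box$, we get $G'_n = \Box$, so the lifted derivation is indeed a refutation. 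For the term-depth bound I would invoke the elementary fact that instantiation never decreases term-depth: from $G_i = G'_i\delta_i$ it follows that the term-depth of $G'_i$ is at most that of $G_i$, which is at most $l$ by hypothesis; hence all goals of the lifted refutation have term-depth bounded by $l$.

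I expect the single-step claim to be the main obstacle, as it is where the variable-disjointness hypothesis must be used precisely, where the factorization of the unifier $\delta\sigma$ through the mgu $\sigma'$ is invoked, and where the substitution algebra (associativity of composition together with the equalities $A\delta = A$ and $B_l\delta = B_l$) must be checked so that the mgu identity and the instance identity emerge simultaneously. Once that step is established, the induction, the telescoping of compositions, and the term-depth argument are all routine.
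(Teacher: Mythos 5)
Your proposal is correct; note that the paper itself gives no proof of this lemma, deferring to the source cited as~\cite{StaabCourse}, and your argument is precisely the standard single-step lifting induction that such sources present, with the variable-disjointness hypothesis used exactly where it must be (to get $A\delta = A$ and $B_l\delta = B_l$, so that $\delta\sigma$ factors through the mgu $\sigma'$). Your added observation that $G_i = G'_i\delta_i$ transports the term-depth bound (since instantiation cannot decrease term-depth) is exactly the right way to handle the strengthening over the classical statement, so the proposal matches the intended proof in all essentials.
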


The Lifting Lemma given in \cite{Lloyd87} does not contain the condition ``the variables of the input program clauses are distinct from the variables in $G$ and $\theta$'' and is therefore inaccurate (see, e.g., \cite{Apt97Book}). The correct version given above follows from the one presented, amongst others, in~\cite{StaabCourse}. For applications of this lemma in this paper, we assume that {\em fresh variables} from a special infinite list of variables are used for renaming variables of input program clauses in SLD-derivations, and that mgu's are computed using a standard method. The mentioned condition will thus be satisfied.  

In a computational process, a {\em fresh variant} of a formula $\varphi$, where $\varphi$ can be an atom, a goal $\gets A$ or a program clause $A \gets B_1,\ldots,B_k$ (written without quantifiers), is a formula $\varphi\theta$, where $\theta$ is a renaming substitution such that $dom(\theta) = \Var(\varphi)$ and $range(\theta)$ consists of fresh variables that were not used in the computation (and the input). 


\subsection{Definitions for Horn Knowledge Bases}

Similarly as for deductive databases, we classify each predicate either as {\em intensional} or as {\em extensional}.  
A {\em generalized tuple} is a tuple of terms, which may contain function symbols and variables.
A {\em generalized relation} is a set of generalized tuples of the same arity.
A {\em Horn knowledge base} is defined to be a pair consisting of a positive logic program for defining intensional predicates and a {\em generalized \edb\ instance}, which is a function mapping each \edb\ $n$-ary predicate to an $n$-ary generalized relation. Note that intensional predicates are defined by a positive logic program which may contain function symbols and not be range-restricted.
From now on, we use the term ``relation'' to mean a generalized relation, and the term ``\edb\ instance'' to mean a generalized \edb\ instance.


Given a Horn knowledge base specified by a positive logic program $P$ and an \edb\ instance $I$, a {\em query} to the knowledge base is a positive formula $\varphi(\ovl{x})$ without quantifiers, where $\ovl{x}$ is a tuple of all the variables of $\varphi$.\footnote{A {\em positive formula without quantifiers} is a formula built up from atoms using only connectives $\land$ and~$\lor$.} A ({\em correct}) {\em answer} for the query is a tuple $\tpc$ of terms of the same length as $\ovl{x}$ such that $P \cup I \models \V(\varphi(\tpc))$. When measuring {\em data complexity}, we assume that $P$ and $\varphi$ are fixed, while $I$ varies. Thus, the pair $(P,\varphi(\ovl{x}))$ is treated as a {\em query} to the \edb\ instance $I$. We will use the term ``query'' in this meaning.

It can easily be shown that, every query $(P,\varphi(\ovl{x}))$ can be
transformed in polynomial time to an equivalent query of the
form $(P',q(\ovl{x}))$ over a signature extended with new \idb\
predicates, including $q$. The equivalence means that, for every
\edb\ instance $I$ and every tuple $\tpc$ of terms of the same
length as $\ovl{x}$, $P \cup I \models \V(\varphi(\tpc))$ iff $P'
\cup I \models \V(q(\tpc))$. The transformation is based on
introducing new predicates for defining complex subformulas
occurring in the query. For example, if $\varphi = p(x) \land
r(x,y)$, then \mbox{$P' = P \cup \{q(x,y) \gets p(x), r(x,y)\}$}, where
$q$ is a new \idb\ predicate.

Without loss of generality, we will consider only queries of the form $(P,q(\ovl{x}))$, where $q$ is an \idb\ predicate. Answering such a query on an \edb\ instance $I$ is to find (correct) answers for $P \cup I \cup \{\gets q(\ovl{x})\}$.


\section{Query-Subquery Nets}
\label{section: QSQN}

Let $P$ be a positive logic program and $\varphi_1$, \ldots, $\varphi_m$ be all the program clauses of $P$, with
\[ \varphi_i = (A_i \gets B_{i,1}, \ldots, B_{i,n_i})\] 
where $n_i \geq 0$. A {\em query-subquery net structure} (in short, {\em QSQ-net structure}) of $P$ is a tuple $\tuple{V,E,T}$ such that:
\begin{itemize}
\item $V$ consists of nodes
   \begin{itemize}
   \item $input\!\_p$ and $ans\!\_p$ for each intensional predicate $p$ of $P$
   \item $\preFilter_i$, $\filter_{i,1}$, \ldots, $\filter_{i,n_i}$, $\postFilter_i$ for each $1 \leq i \leq m$
   \end{itemize}
\item $E$ consists of edges 
   \begin{itemize}
   \item $(\filter_{i,1}, \filter_{i,2})$, \ldots, $(\filter_{i,n_i-1}$, $\filter_{i,n_i})$ for each $1 \leq i \leq m$
   \item $(\preFilter_i, \filter_{i,1})$ and $(\filter_{i,n_i}, \postFilter_i)$ for each $1 \leq i \leq m$ with $n_i \geq 1$
   \item $(\preFilter_i, \postFilter_i)$ for each $1 \leq i \leq m$ with $n_i = 0$
   \item $(\inp{p},\preFilter_i)$ and $(\postFilter_i,\ans{p})$ for each $1 \leq i \leq m$, where $p$ is the predicate of~$A_i$
   \item $(\filter_{i,j},\inp{p})$ and $(\ans{p},\filter_{i,j})$ for each intensional predicate $p$ and each \mbox{$1 \leq i \leq m$} and $1 \leq j \leq n_i$ such that $B_{i,j}$ is an atom of~$p$
   \end{itemize}

\item $T$ is a function, called the {\em memorizing type} of the net structure, mapping each node $\filter_{i,j} \in V$ such that the predicate of $B_{i,j}$ is extensional to {\em true} or {\em false}.
\end{itemize}
If $(v,w) \in E$ then we call $w$ a {\em successor} of $v$, and $v$ a {\em predecessor} of $w$. Note that $V$ and $E$ are uniquely specified by $P$. We call the pair $\tuple{V,E}$ the {\em QSQ topological structure} of~$P$. 

\begin{example}\label{example: HGDSA} 
Figure~\ref{fig: QSQ-net example} illustrates the QSQ topological structure of the following positive logic program:
\[
\begin{array}{l}
p(x,y) \gets q(x,y) \\
p(x,y) \gets q(x,z), p(z,y).
\end{array}
\]
\myend
\end{example}

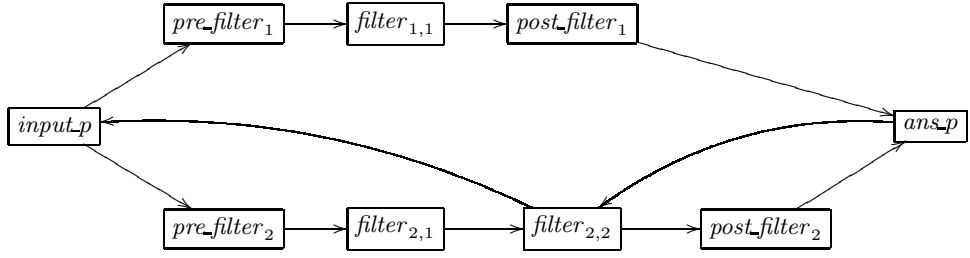
\begin{figure}
\begin{footnotesize}
\begin{center}
\begin{tabular}{c}
\xymatrix{
&
*+[F]{\preFilter_1}
\ar@{->}[r]
&
*+[F]{\filter_{1,1}}
\ar@{->}[r]
&
*+[F]{\postFilter_1}
\ar@{->}[rrd]
\\
*+[F]{\inp{p}}
\ar@{->}[ru]
\ar@{->}[rd]
& & & & & 
*+[F]{\ans{p}}
\ar@/_{1.3pc}/@{->}[lld]
\\
&
*+[F]{\preFilter_2}
\ar@{->}[r]
&
*+[F]{\filter_{2,1}}
\ar@{->}[r]
&
*+[F]{\filter_{2,2}}
\ar@{->}[r]
\ar@/_{1.3pc}/@{->}[lllu]
&
*+[F]{\postFilter_2}
\ar@{->}[ru]
} 
\end{tabular}
\end{center}
\end{footnotesize}
\caption{The QSQ topological structure of the program given in Example~\ref{example: HGDSA}.
\label{fig: QSQ-net example}
}
\end{figure}

A {\em query-subquery net} (in short, {\em QSQ-net}) of $P$ is a tuple $N = \tuple{V,E,T,C}$ such that $\tuple{V,E,T}$ is a QSQ-net structure of $P$ and $C$ is a mapping that associates each node $v \in V$ with a structure called the {\em contents} of $v$, satisfying the following conditions:
\begin{itemize}
\item $C(v)$, where $v = \inp{p}$ or $v = \ans{p}$ for an intensional predicate $p$ of $P$, consists of:
   \begin{itemize}
   \item $\tuples(v)\,$: a set of generalized tuples of the same arity as $p$
   \item $\unprocessed(v,w)$ for $(v,w) \in E$: a subset of $\tuples(v)$
   \end{itemize}

\item $C(v)$, where $v = \preFilter_i$, consists of:
   \begin{itemize}
   \item $\atom(v) = A_i$ and $\postVars(v) = \Var((B_{i,1},\ldots,B_{i,n_i}))$
   \end{itemize}

\item $C(v)$, where $v = \postFilter_i$, is empty, but we assume $\preVars(v) = \emptyset$

\item $C(v)$, where $v = \filter_{i,j}$ and $p$ is the predicate of $B_{i,j}$, consists~of:
   \begin{itemize}
   \item $\kind(v) = \edb$ if $p$ is extensional, and $\kind(v) = \idb$ otherwise
   \item $\pred(v) = p$ and $\atom(v) = B_{i,j}$
   \item $\preVars(v)= \Var((B_{i,j},\ldots,B_{i,n_i}))$ and $\postVars(v)= \Var((B_{i,j+1},\ldots,B_{i,n_i}))$
   \item $\subqueries(v)$: a set of pairs of the form $(\tpc,\delta)$, where $\tpc$ is a generalized tuple of the same arity as the predicate of $A_i$ and $\delta$ is an idempotent substitution such that $dom(\delta) \subseteq \preVars(v)$ and $dom(\delta) \cap \Var(\tpc) = \emptyset$
   \item $\unprocessedSubqueries(v) \subseteq \subqueries(v)$
   \item in the case $p$ is intensional:
	\begin{itemize}
	   \item $\unprocessedSQ(v) \subseteq \subqueries(v)$
	   \item $\unprocessedTuples(v)\,$: a set of generalized tuples of the same arity as~$p$
	\end{itemize}
   \end{itemize} 

\item if $v = \filter_{i,j}$, $\kind(v) = \edb$ and $T(v) = false$ then $\subqueries(v) = \emptyset$.
\end{itemize}

By a {\em subquery} we mean a pair of the form $(\tpc,\delta)$, where $\tpc$ is a generalized tuple and $\delta$ is an idempotent substitution such that $dom(\delta) \cap \Var(\tpc) = \emptyset$.

For $v = \filter_{i,j}$ and $p$ being the predicate of $A_i$, the meaning of a subquery $(\tpc,\delta) \in subqueries(v)$ is that: for processing the goal $\gets p(\ovl{s})$ with $\ovl{s} \in \tuples(\inp{p})$ using the program clause $\varphi_i = (A_i \gets B_{i,1},\ldots,B_{i,n_i})$, unification of $p(\ovl{s})$ and $A_i$ as well as processing of the subgoals $B_{i,1},\ldots,B_{i,j-1}$ were done, amongst others, by using a sequence of mgu's $\gamma_0,\ldots,\gamma_{j-1}$ with the property that $\tpc = \ovl{s}\gamma_0\ldots\gamma_{j-1}$ and $\delta = (\gamma_0\ldots\gamma_{j-1})_{|\Var((B_{i,j},\ldots,B_{i,n_i}))}$. 

An {\em empty QSQ-net of $P$} is a QSQ-net of $P$ with all the sets of the
forms $\tuples(v)$, $\unprocessed(v,w)$, $\subqueries(v)$, $\unprocessedSubqueries(v)$, $\unprocessedSQ(v)$, $\unprocessedTuples(v)$ being empty. 

In a QSQ-net, if $v = \preFilter_i$ or $v = \postFilter_i$ or $v = \filter_{i,j}$ and $\kind(v) = \edb$ then $v$ has exactly one successor, which we denote by $succ(v)$. 

If $v$ is $\filter_{i,j}$ with $\kind(v) = \idb$ and $\pred(v) = p$ then $v$ has exactly two successors. In that case, let 
\[
succ(v) = \left\{
	\begin{array}{ll}
	\filter_{i,j+1} & \textrm{if } n_i > j\\
	\postFilter_i\quad & \textrm{otherwise}
	\end{array}
	\right.
\] 
and $succ_2(v) = \inp{p}$. The set $\unprocessedSubqueries(v)$ is used for (i.e.\ corresponds to) the edge $(v,succ(v))$, while $\unprocessedSQ(v)$ is used for the edge $(v,succ_2(v))$.  

Note that if $succ(v) = w$ then $\postVars(v) = \preVars(w)$. 
In particular, $\postVars(\filter_{i,n_i}) = \preVars(\postFilter_i) = \emptyset$. 

The formats of data transferred through edges of a QSQ-net are specified as follows:
\begin{itemize}
\item data transferred through an edge of the form $(\inp{p},v)$, $(v,\inp{p})$, $(v,\ans{p})$ or $(\ans{p},v)$ is a finite set of generalized tuples of the same arity as $p$
\item data transferred through an edge $(u,v)$ with $v = \filter_{i,j}$ and $u$ not being of the form $\ans{p}$ is a finite set of subqueries that can be added to $\subqueries(v)$
\item data transferred through an edge $(v, \postFilter_i)$ is a set of subqueries $(\tpc,\varepsilon)$ such that $\tpc$ is a generalized tuple of the same arity as the predicate of $A_i$.
\end{itemize}

If $(\tpc,\delta)$ and $(\tpc',\delta')$ are subqueries that can be transferred through an edge to $v$ then we say that $(\tpc,\delta)$ is {\em more general} than $(\tpc',\delta')$ w.r.t.~$v$, and that $(\tpc',\delta')$ is {\em less general} than $(\tpc,\delta)$ w.r.t.~$v$, if there exists a substitution $\gamma$ such that $\tpc\gamma = \tpc'$ and $(\delta\gamma)_{|\preVars(v)} = \delta'$.  


Informally, a subquery $(\tpc,\delta)$ transferred through an edge to $v$ is processed as follows:
\begin{itemize}
\item if $v = \filter_{i,j}$, $\kind(v) = \edb$ and $\pred(v) = p$ then, for each $\tpc' \in I(p)$, if $\atom(v)\delta = B_{i,j}\delta$ is unifiable with a fresh variant of $p(\tpc')$ by an mgu $\gamma$ then transfer the subquery $(\tpc\gamma,(\delta\gamma)_{|\postVars(v)})$ through $(v,succ(v))$

\item if $v = \filter_{i,j}$, $\kind(v) = \idb$ and $\pred(v) = p$ then 
  \begin{itemize}
  \item transfer the input tuple $\tpc'$ such that $p(\tpc') = \atom(v)\delta = B_{i,j}\delta$ through $(v,\inp{p})$ to add a fresh variant of it to $\tuples(\inp{p})$
  \item for each currently existing $\tpc' \in \tuples(\ans{p})$, if $\atom(v)\delta = B_{i,j}\delta$ is unifiable with a fresh variant of $p(\tpc')$ by an mgu $\gamma$ then transfer the subquery $(\tpc\gamma,(\delta\gamma)_{|\postVars(v)})$ through $(v,succ(v))$
  \item store the subquery $(\tpc,\delta)$ in $\subqueries(v)$, and later, for each new $\tpc'$ added to $\tuples(\ans{p})$, if $\atom(v)\delta = B_{i,j}\delta$ is unifiable with a fresh variant of $p(\tpc')$ by an mgu $\gamma$ then transfer the subquery $(\tpc\gamma,(\delta\gamma)_{|\postVars(v)})$ through $(v,succ(v))$
  \end{itemize}

\item if $v = \postFilter_i$ and $p$ is the predicate of $A_i$ then transfer the answer tuple $\tpc$ through $(postFilter_i,\ans{p})$ to add it to $\tuples(\ans{p})$. 
\end{itemize}

Formally, the processing of a subquery is designed more sophisticatedly so that:
\begin{itemize}
\item every subquery / input tuple / answer tuple subsumed by another one is ignored
\item every subquery / input tuple / answer tuple with term-depth greater than $l$ is ignored 
\item the processing is divided into smaller steps which can be delayed to maximize flexibility and allow various control strategies
\item the processing is done set-at-a-time (e.g., for all the unprocessed subqueries accumulated in a given node).
\end{itemize}

Procedure $\Transfer(D,u,v)$ (given on page~\pageref{proc: Transfer}) specifies the effects of transferring data~$D$ through an edge $(u,v)$ of a QSQ-net. 
If $v$ is of the form $\preFilter_i$ or $\postFilter_i$ or ($v = \filter_{i,j}$ and $\kind(v) = \edb$ and $T(v) = false$) then the input $D$ for $v$ is processed immediately and appropriate data $\Gamma$ is produced and transferred through $(v,succ(v))$. Otherwise, the input $D$ for $v$ is not processed immediately, but accumulated into the structure of $v$ in an appropriate way. 

Function $\ActiveEdge(u,v)$ (given on page~\pageref{func: ActiveEdge}) returns $true$ for an edge $(u,v)$ if data accumulated in $u$ can be processed to produce some data to transfer through $(u,v)$, and returns $false$ otherwise. 

In the case $\ActiveEdge(u,v)$ is true, procedure $\Fire(u,v)$ (given on page~\pageref{proc: Fire}) processes data accumulated in $u$ that has not been processed before to transfer appropriate data through the edge $(u,v)$. 

Algorithm~\ref{alg: QSQN} (given on page~\pageref{alg: QSQN}) presents our QSQN evaluation method for Horn knowledge bases. 


\begin{figure*}
\begin{procedure}[H]
\caption{add-subquery($\tpc,\delta,\Gamma,v$)\label{proc: AddSubquery}} 
\Purpose{add the subquery $(\tpc,\delta)$ to $\Gamma$, but keep in $\Gamma$ only the most general subqueries w.r.t.~$v$.}
\BlankLine

\If{$\termDepth(\tpc) \leq l$ and $\termDepth(\delta) \leq l$ and no subquery in $\Gamma$ is more general than $(\tpc,\delta)$ w.r.t.~$v$}{
   delete from $\Gamma$ all subqueries less general than $(\tpc,\delta)$ w.r.t.~$v$\;
   add $(\tpc,\delta)$ to $\Gamma$
}
\end{procedure}

\medskip

\begin{procedure}[H]
\caption{add-tuple($\tpc,\Gamma$)\label{proc: AddTuple}} 
\Purpose{add the tuple $\tpc$ to $\Gamma$, but keep in $\Gamma$ only the most general tuples.}
\BlankLine

let $\tpc'$ be a fresh variant of $\tpc$\;
\If{$\tpc'$ is not an instance of any tuple from $\Gamma$}{
   delete from $\Gamma$ all tuples that are instances of $\tpc'$\;
   add $\tpc'$ to $\Gamma$
}
\end{procedure}
\end{figure*}

\begin{procedure}
\caption{transfer($D,u,v$)\label{proc: Transfer}} 
\GlobalData{a Horn knowledge base $\tuple{P,I}$, a QSQ-net $N = \tuple{V,E,T,C}$ of $P$, and a~term-depth bound $l$.}
\Input{data $D$ to transfer through the edge $(u,v) \in E$.}

\BlankLine

\lIf{$D = \emptyset$}{\Return}\;


\uIf{$u$ is $\inp{p}$}{
\label{transfer: step FDWES}
$\Gamma := \emptyset$\;
\ForEach{$\tpc \in D$}{
  \If{$p(\tpc)$ and $\atom(v)$ are unifiable by an mgu $\gamma$}{
	$\AddSubquery(\tpc\gamma, \gamma_{|\postVars(v)}, \Gamma, succ(v))$
  }
}
$\Transfer(\Gamma,v,succ(v))$
}
\lElseIf{$u$ is $\ans{p}$}{$\unprocessedTuples(v) := \unprocessedTuples(v) \cup D$}\\
\uElseIf{$v$ is $\inp{p}$ or $\ans{p}$}{
  \ForEach{$\tpc \in D$}{
     let $\tpc'$ be a fresh variant of $\tpc$\;
     \If{$\tpc'$ is not an instance of any tuple from $\tuples(v)$}{
	\ForEach{$\tpc'' \in \tuples(v)$}{
	   \If{$\tpc''$ is an instance of $\tpc'$}{
	      delete $\tpc''$ from $\tuples(v)$\;
	      \lForEach{$\tuple{v,w} \in E$}{delete $\tpc''$ from $\unprocessed(v,w)$}
	   }
	}
	\uIf{$v$ is $\inp{p}$}{
	   add $\tpc'$ to $\tuples(v)$\;
	   \lForEach{$\tuple{v,w} \in E$}{add $\tpc'$ to $\unprocessed(v,w)$}
	}
	\Else{
	   add $\tpc$ to $\tuples(v)$\;
	   \lForEach{$\tuple{v,w} \in E$}{add $\tpc$ to $\unprocessed(v,w)$}
	}
     }
  }
}
\uElseIf{$v$ is $\filter_{i,j}$ and $\kind(v) = \edb$ and $T(v) = false$}{
  let $p = \pred(v)$ and set $\Gamma := \emptyset$\;
  \ForEach{$(\tpc,\delta) \in D$}{
     \If{$\termDepth(\atom(v)\delta) \leq l$}{
        \ForEach{$\tpc' \in I(p)$}{
	 \If{$\atom(v)\delta$ is unifiable with a fresh variant of $p(\tpc')$ by an mgu $\gamma$}{
	 	$\AddSubquery(\tpc\gamma, (\delta\gamma)_{|\postVars(v)}, \Gamma, succ(v))$
	 }
	}
     }
  }
  $\Transfer(\Gamma,v,succ(v))$
}
\uElseIf{$v$ is $\filter_{i,j}$ and ($\kind(v) = \edb$ and $T(v) = true$ or $\kind(v) = \idb$)}{
  \ForEach{$(\tpc,\delta) \in D$}{
     \If{$\termDepth(\atom(v)\delta) \leq l$}{
        \If{no subquery in $\subqueries(v)$ is more general than $(\tpc,\delta)$}{
           delete from $\subqueries(v)$ all subqueries less general than $(\tpc,\delta)$\;
           delete from $\unprocessedSubqueries(v)$ all subqueries less general than $(\tpc,\delta)$\;
	   add $(\tpc,\delta)$ to both $\subqueries(v)$ and $\unprocessedSubqueries(v)$\;
	   \If{$\kind(v) = \idb$}{
              delete from $\unprocessedSQ(v)$ all subqueries less general than $(\tpc,\delta)$\;
	      add $(\tpc,\delta)$ to $\unprocessedSQ(v)$
	   }
        }
     }
  }
}
\Else(\tcp*[h]{$v$ is of the form $\postFilter_i$}){
  $\Gamma := \{\tpc \mid (\tpc,\varepsilon) \in D\}$\;
  $\Transfer(\Gamma,v,succ(v))$
  \label{transfer: step FDWET}
}
\end{procedure}

\begin{figure*}
\begin{function}[H]
\caption{active-edge($u,v$)\label{func: ActiveEdge}} 
\GlobalData{a QSQ-net $N = \tuple{V,E,T,C}$.}
\Input{an edge $(u,v) \in E$.}
\Output{$true$ if there are data to transfer through the edge $(u,v)$, and $false$ otherwise.}

\BlankLine
\lIf{$u$ is $\preFilter_i$ or $\postFilter_i$}{\Return false}\\
\lElseIf{$u$ is $\inp{p}$ or $\ans{p}$}{\Return $\unprocessed(u,v) \neq \emptyset$}\\
\uElseIf{$u$ is $\filter_{i,j}$ and $\kind(u) = \edb$}{\Return $T(u) = true \land \unprocessedSubqueries(u) \neq \emptyset$}
\Else(\tcp*[h]{$u$ is of the form $\filter_{i,j}$ and $\kind(u) = \idb$}){
  let $p = \pred(u)$\;
  \lIf{$v = \inp{p}$}{\Return $\unprocessedSQ(u) \neq \emptyset$}\\
  \lElse{\Return $\unprocessedSubqueries(u) \neq \emptyset \lor \unprocessedTuples(u) \neq \emptyset$}
}
\end{function}

\medskip

\begin{procedure}[H]
\caption{fire($u,v$)\label{proc: Fire}} 
\GlobalData{a Horn knowledge base $\tuple{P,I}$, a QSQ-net $N = \tuple{V,E,T,C}$ of $P$, and a~term-depth bound $l$.}
\Input{an edge $(u,v) \in E$ such that $\ActiveEdge(u,v)$ holds.}

\BlankLine

\uIf{$u$ is $\inp{p}$ or $\ans{p}$}{
   $\Transfer(\unprocessed(u,v),u,v)$\;
   $\unprocessed(u,v) := \emptyset$
}
\uElseIf{$u$ is $\filter_{i,j}$ and $\kind(u) = \edb$ and $T(u) = true$}{
  let $p = \pred(u)$ and set $\Gamma := \emptyset$\; 
  \ForEach{$(\tpc,\delta) \in \unprocessedSubqueries(u)$}{
     \ForEach{$\tpc' \in I(p)$}{
	\If{$\atom(u)\delta$ is unifiable with a fresh variant of $p(\tpc')$ by an mgu $\gamma$}{
	   $\AddSubquery(\tpc\gamma,(\delta\gamma)_{|\postVars(u)}, \Gamma,v)$
	}
     }
  }
  $\unprocessedSubqueries(u) := \emptyset$\;
  $\Transfer(\Gamma,u,v)$
}
\ElseIf{$u$ is $\filter_{i,j}$ and $\kind(u) = \idb$}{
  let $p = \pred(u)$ and set $\Gamma := \emptyset$\;
  \uIf{$v = \inp{p}$}{
     \lForEach{$(\tpc,\delta) \in \unprocessedSQ(u)$}{let $p(\tpc') = \atom(u)\delta$, $\AddTuple(\tpc',\Gamma)$}\label{fire: Step HGSAA}\;
     $\unprocessedSQ(u) := \emptyset$\;
  }
  \Else{
     \ForEach{$(\tpc,\delta) \in \unprocessedSubqueries(u)$}{
        \ForEach{$\tpc' \in \tuples(\ans{p})$}{
          \If{$\atom(u)\delta$ is unifiable with a fresh variant of $p(\tpc')$ by an mgu $\gamma$}{
	     $\AddSubquery(\tpc\gamma,(\delta\gamma)_{|\postVars(u)}, \Gamma, v)$
	  }
        }
     }
     $\unprocessedSubqueries(u) := \emptyset$\;
     \BlankLine
     \If{$\unprocessedTuples(u) \neq \emptyset$}{
       \ForEach{$\tpc \in \unprocessedTuples(u)$}{
          \ForEach{$(\tpc',\delta) \in \subqueries(u)$}{
            \If{$\atom(u)\delta$ is unifiable with a fresh variant of $p(\tpc)$ by an mgu $\gamma$}{
	      $\AddSubquery(\tpc'\gamma,(\delta\gamma)_{|\postVars(u)}, \Gamma, v)$
	    }
          }
       }
       $\unprocessedTuples(u) := \emptyset$
     }
  }
  $\Transfer(\Gamma,u,v)$
}
\end{procedure}
\end{figure*}

\begin{figure*}[t]
\begin{algorithm}[H]
\caption{for evaluating a query $(P,q(\ovl{x}))$ on an \edb\ instance $I$.\label{alg: QSQN}}

let $\tuple{V,E,T}$ be a QSQ-net structure of $P$\tcp*{$T$ can be chosen arbitrarily}

set $C$ so that $N = \tuple{V,E,T,C}$ is an empty QSQ-net of $P$\;

\BlankLine

let $\ovl{x}'$ be a fresh variant of $\ovl{x}$\;
$\tuples(\inp{q}) := \{\ovl{x}'\}$\;
\lForEach{$(\inp{q},v) \in E$}{$\unprocessed(\inp{q},v) := \{\ovl{x}'\}$}\;

\BlankLine

\While{there exists $(u,v) \in E$ such that $\ActiveEdge(u,v)$ holds}{
  select $(u,v) \in E$ such that $\ActiveEdge(u,v)$ holds\;
  \tcp{any strategy is acceptable for the above selection}
  $\Fire(u,v)$
}

\BlankLine
\Return $\tuples(\ans{q})$
\end{algorithm}
\end{figure*}

\subsection{Relaxing Term-Depth Bound}

Suppose that we want to compute as many as possible but no more than $k$ correct answers for a query $(P,q(\ovl{x}))$ on an \edb\ instance $I$ within time limit $L$. Then we can use iterative deepening search which iteratively increases term-depth bound for atoms and substitutions occurring in the computation as follows:
\begin{enumerate}
\item Initialize term-depth bound $l$ to 0 (or another small natural number).
\item Run Algorithm~\ref{alg: QSQN} for evaluating $(P,q(\ovl{x}))$ on $I$ within the time limit.
\item While $\tuples(\ans{q})$ contains less than $k$ tuples and the time limit was not reached yet, do:
  \begin{enumerate}
  \item Clear (empty) all the sets of the form $\tuples(\inp{p})$ and $\subqueries(\filter_{i,j})$. 
  \item Increase term-depth bound $l$ by 1.
  \item Run Algorithm~\ref{alg: QSQN} without Steps~1 and~2.
  \end{enumerate}
\item Return $\tuples(\ans{q})$.
\end{enumerate}


\section{Soundness and Completeness}
\label{section: sound-comp}

The following lemma states a property of Algorithm~\ref{alg: QSQN}. Its proof is straightforward.

\begin{lemma} \label{lemma: sound1}
Consider a run of Algorithm~\ref{alg: QSQN} (using parameter $l$) on a query $(P,q(\ovl{x}))$ and an \edb\ instance $I$ and let $\tuple{V,E,T,C}$ be the resulting QSQ-net. Let $v = \filter_{i,j}$ for some $1 \leq i \leq m$ and $1 \leq j \leq n_i$. Let $w = succ(v)$ and let $u = \filter_{i,j-1}$ if $j > 1$, and $u = \preFilter_i$ otherwise. Suppose that a subquery $(\ovl{s}',\delta')$ was transferred through $(v,w)$ at some step $k$. Then a subquery $(\ovl{s},\delta)$ was transferred through $(u,v)$ at some earlier step $h < k$ with the property that:
\begin{itemize}
\item if $\kind(v) = \edb$ and $\pred(v) = p$ then there exists $\tpc' \in I(p)$ such that $\atom(v)\delta$ is unifiable with a fresh variant of $p(\tpc')$ by an mgu $\gamma$, $\ovl{s}'=\ovl{s}\gamma$ and $\delta' = (\delta\gamma)_{|\postVars(v)}$
\item if $\kind(v) = \idb$ and $\pred(v) = p$ then there was $\tpc' \in \tuples(\ans{p})$ at step $k$ such that $\atom(v)\delta$ is unifiable with a fresh variant of $p(\tpc')$ by an mgu $\gamma$, $\ovl{s}'=\ovl{s}\gamma$ and $\delta' = (\delta\gamma)_{|\postVars(v)}$.
\myend
\end{itemize}
\end{lemma}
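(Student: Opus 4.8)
The plan is to follow the single ``step $k$'' transfer event backwards to the line of code that produced it, classifying by $\kind(v)$ and, for $\kind(v)=\edb$, by $T(v)$. The starting observation is that a set of subqueries is pushed along an edge $(v,succ(v))$ with $v=\filter_{i,j}$ in exactly three places: the branch of $\Transfer(D,u,v)$ that handles $\kind(v)=\edb$ with $T(v)=false$, and the two branches of $\Fire(v,w)$ that handle $\kind(v)=\edb$ with $T(v)=true$ and $\kind(v)=\idb$ with target $w\neq\inp{p}$ (i.e.\ firing along $succ(v)$ rather than $succ_2(v)$). In each of these the transferred set $\Gamma$ is assembled solely by calls $\AddSubquery(\ovl{s}\gamma,(\delta\gamma)_{|\postVars(v)},\Gamma,w)$; hence a subquery $(\ovl{s}',\delta')$ actually present in $\Gamma$ at the moment $\Transfer(\Gamma,v,w)$ is invoked was inserted by one such call and not deleted afterwards. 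Reading that call off yields the witness subquery $(\ovl{s},\delta)$, the mgu $\gamma$, and the tuple $\tpc'$, together with the identities $\ovl{s}'=\ovl{s}\gamma$ and $\delta'=(\delta\gamma)_{|\postVars(v)}$ for free, while the unification guard in the same loop supplies ``$\atom(v)\delta$ is unifiable with a fresh variant of $p(\tpc')$ by $\gamma$''.

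What remains is to locate $(\ovl{s},\delta)$ and $\tpc'$ as required and to check the time ordering. The tuple is the easy part: in the two $\edb$ branches the guard ranges over $\tpc'\in I(p)$, which is exactly the first bullet of the statement. For the source subquery, note that the only incoming edges of a $\filter_{i,j}$ node are the filter-chain edge $(u,v)$ with $u=\filter_{i,j-1}$ (if $j>1$) or $u=\preFilter_i$ (if $j=1$), and, when $p=\pred(v)$ is intensional, the edge $(\ans{p},v)$; the latter only ever carries answer tuples, which $\Transfer$ routes into $\unprocessedTuples(v)$, never into $\subqueries(v)$. So any subquery residing in $v$ arrived through $(u,v)$. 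When $T(v)=false$ the loop ranges directly over $(\ovl{s},\delta)\in D$, and $D$ is precisely the payload of $\Transfer(D,u,v)$, whose nested call $\Transfer(\Gamma,v,w)$ therefore occurs at a later time $k>h$. When $T(v)=true$ the loop ranges over $\unprocessedSubqueries(v)$, a set populated only by the accumulating branch of a prior $\Transfer$ through $(u,v)$, which necessarily precedes the $\Fire(v,w)$ that consumes it.

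The $\idb$ case is where the argument needs the most care, and I expect it to be the main obstacle. Here $(\ovl{s}',\delta')$ may come from either of two blocks of $\Fire(v,w)$: one pairs a subquery in $\unprocessedSubqueries(v)$ with a tuple $\tpc'\in\tuples(\ans{p})$, the other pairs a subquery in $\subqueries(v)$ with a tuple drawn from $\unprocessedTuples(v)$ that plays the role of $\tpc'$. In both, the subquery lies in $\subqueries(v)$, which (exactly as above) can only have been filled by an earlier $\Transfer$ through the filter-chain edge $(u,v)$, and the unification guard again yields $\gamma$, $\ovl{s}'=\ovl{s}\gamma$, $\delta'=(\delta\gamma)_{|\postVars(v)}$. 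The delicate point is the membership ``$\tpc'\in\tuples(\ans{p})$ at step $k$''. In the first block $\tpc'$ is read straight out of $\tuples(\ans{p})$ during the firing, so membership at step $k$ is immediate. In the second block the tuple comes from $\unprocessedTuples(v)$, which is fed from $\tuples(\ans{p})$ along $(\ans{p},v)$; since a tuple deposited in $\unprocessedTuples(v)$ is not retracted when a more general answer tuple later subsumes it in $\tuples(\ans{p})$, one must read the clause as asserting that $\tpc'$ had been added to $\tuples(\ans{p})$ no later than step $k$ (which is all the ensuing soundness argument uses), or else tighten the bookkeeping so that the two senses coincide. Granting this reading, the verification in each branch is mechanical, matching the ``straightforward'' claim.
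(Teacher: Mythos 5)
Your proposal is correct, and it is essentially the argument the paper intends: the paper gives no proof of Lemma~\ref{lemma: sound1} at all (it only declares the proof straightforward), and the straightforward proof is exactly your exhaustive classification of the three code sites that emit subqueries along $(v,succ(v))$ --- the $\edb$/$T(v)=\mathit{false}$ branch of \Transfer, and the two \Fire branches for $\edb$/$T(v)=\mathit{true}$ and $\idb$ with target $succ(v)$ --- combined with the observation that $\subqueries(v)$ and $\unprocessedSubqueries(v)$ are populated only by data arriving over the filter-chain edge $(u,v)$, never over $(\ans{p},v)$, whose payload \Transfer routes into $\unprocessedTuples(v)$.

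Your flag on the $\idb$ case is moreover a genuine catch, not mere pedantry: in the second block of the $\idb$ branch of \Fire, the tuple is drawn from $\unprocessedTuples(v)$, and the subsumption bookkeeping in \Transfer deletes a subsumed tuple from $\tuples(\ans{p})$ and from the sets $\unprocessed(\ans{p},\cdot)$ but \emph{not} from $\unprocessedTuples$ at filter nodes, so the tuple $\tpc'$ used at step $k$ may indeed no longer be a member of $\tuples(\ans{p})$ at step $k$; the lemma's literal phrasing ``there was $\tpc' \in \tuples(\ans{p})$ at step $k$'' is thus slightly too strong. Your weakened reading --- $\tpc'$ was added to $\tuples(\ans{p})$ at some step $\leq k$ --- is exactly what the downstream argument needs: the outer induction in the proof of Theorem~\ref{theorem: sound} inducts on the step at which a tuple was added to $\tuples(\ans{p})$, so it only requires that $\tpc'$ entered $\tuples(\ans{p})$ strictly before $\tpc$ does, and the soundness proof goes through unchanged under your reading.
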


\begin{theorem}[Soundness] \label{theorem: sound}
After a run of Algorithm~\ref{alg: QSQN} on a query $(P,q(\ovl{x}))$ and an \edb\ instance $I$, for all intensional predicates $p$ of $P$, every computed answer $\tpc \in \tuples(\ans{p})$ is a correct answer in the sense that $P \cup I \models \V(p(\tpc))$.
\end{theorem}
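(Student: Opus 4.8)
The plan is to view the \edb\ instance $I$ as a set of (universally quantified) unit clauses, so that $P \cup I$ is an ordinary positive logic program with $P \cup I \models \V(\varphi_i)$ for each program clause $\varphi_i$ and $P \cup I \models \V(p(\tpc'))$ for each fact $\tpc' \in I(p)$. I would then derive the theorem from a single invariant maintained throughout the run, proved by induction on the number of elementary steps (calls to $\Transfer$ and $\Fire$) executed so far. The invariant has two simultaneously-maintained parts:
(I1) for every intensional $p$ and every $\tpc \in \tuples(\ans{p})$, $P \cup I \models \V(p(\tpc))$;
and (I2) for every subquery $(\tpc,\delta)$ ever stored in $\subqueries(\filter_{i,j})$ or transferred through an edge into $\filter_{i,j}$, writing $p$ for the predicate of $A_i$,
\[ P \cup I \models \V\big((B_{i,j} \land \cdots \land B_{i,n_i})\delta \supset p(\tpc)\big). \]
The intended reading is exactly the ``meaning of a subquery'' spelled out after the definition of QSQ-net: if $\tpc = \ovl{s}\gamma_0\cdots\gamma_{j-1}$ and $\delta = (\gamma_0\cdots\gamma_{j-1})_{|\Var((B_{i,j},\ldots,B_{i,n_i}))}$, then, because instantiating $\varphi_i$ gives $P \models \V(p(\tpc) \leftarrow (B_{i,1}\land\cdots\land B_{i,n_i})\gamma_0\cdots\gamma_{j-1})$ and the first $j-1$ body atoms have already been discharged, (I2) records precisely what remains to be proved. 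Once (I1) is established the theorem is immediate; note that the modus-ponens reasoning below could equally be replaced by building an explicit SLD-refutation and invoking Theorem~\ref{theorem: SLD soundness and completeness}.

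The base case of (I2) is at $\preFilter_i$: when $\Transfer$ pushes an input tuple $\ovl{s} \in \tuples(\inp{p})$ through $(\inp{p},\preFilter_i)$, it unifies $p(\ovl{s})$ with $A_i$ by an mgu $\gamma$ and emits $(\ovl{s}\gamma,\ \gamma_{|\postVars(\preFilter_i)})$ into $\filter_{i,1}$; since $p(\ovl{s}\gamma)=A_i\gamma$ and $\postVars(\preFilter_i)=\Var((B_{i,1},\ldots,B_{i,n_i}))$, instantiating $\varphi_i$ by $\gamma$ gives (I2) for $j=1$. For the inductive step I would invoke Lemma~\ref{lemma: sound1}, which says that any subquery $(\ovl{s}',\delta')$ transferred through $(\filter_{i,j},succ(\filter_{i,j}))$ arises from a subquery $(\ovl{s},\delta)$ entering $\filter_{i,j}$, together with a witness $\tpc'$ and mgu $\gamma$ with $\ovl{s}'=\ovl{s}\gamma$ and $\delta'=(\delta\gamma)_{|\postVars(\filter_{i,j})}$. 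In the \edb\ case $\tpc'\in I(p)$ and in the \idb\ case $\tpc'$ was an answer tuple in $\tuples(\ans{p})$; either way $\atom(\filter_{i,j})\delta\gamma = B_{i,j}\delta\gamma$ is an instance of a fresh variant of $p(\tpc')$, so $P\cup I \models \V(B_{i,j}\delta\gamma)$ (using (I1) in the \idb\ case). Applying $\gamma$ to the hypothesis (I2), discharging the now-valid conjunct $B_{i,j}\delta\gamma$ by propositional reasoning under the universal closure, and using the restriction identity $(B_{i,j+1}\land\cdots\land B_{i,n_i})\delta\gamma = (B_{i,j+1}\land\cdots\land B_{i,n_i})\delta'$ (valid because $\delta'$ is $\delta\gamma$ restricted to exactly $\Var((B_{i,j+1},\ldots,B_{i,n_i}))$), yields (I2) for $j+1$ with head tuple $\ovl{s}'$. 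The terminal case is when a subquery reaches $\filter_{i,n_i}$ and is transferred to $\postFilter_i$: there $\postVars(\filter_{i,n_i})=\emptyset$, the trailing conjunction is empty, and (I2) collapses to $P \cup I \models \V(p(\ovl{s}'))$; since $\Transfer$ then adds exactly this tuple to $\tuples(\ans{p})$, part (I1) is reestablished. The case $n_i=0$ is the same argument with an empty body already at $\preFilter_i$.

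Two routine observations close the induction. First, the operations that merely prune data — discarding a subquery or tuple subsumed by a more general one, or ignoring any subquery or tuple of term-depth exceeding $l$ — only delete elements, so the universally quantified invariant (I1)--(I2) is trivially preserved. Second, the simultaneous induction is well-founded because each appeal to (I1) inside an \idb\ filter refers to an answer tuple already present at a strictly earlier step, so there is no circularity at a single step. The part I expect to be most delicate is the faithful bookkeeping across the several branches of $\Transfer$ and $\Fire$ — the \edb-$false$, \edb-$true$, and the two \idb\ sub-loops over $\unprocessedSubqueries$ and $\unprocessedTuples$ — since each branch renames its witness by a \emph{fresh} variant and composes idempotent mgu's with a restriction to $\postVars$; I must verify in every branch that the emitted substitution is genuinely $\delta\gamma$ restricted to the successor's $\preVars$, so that the restriction identity applies verbatim. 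Managing these renamings correctly (equivalently, using the Lifting Lemma to pass between a fresh variant of $p(\tpc')$ and $p(\tpc')$ itself) is the only real obstacle; everything else is the modus-ponens-under-$\V$ step repeated along each clause body.
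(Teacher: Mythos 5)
Your proof is correct, and it reorganizes the paper's ingredients into a genuinely different proof architecture. The paper fixes one answer tuple $\tpc$, inducts on the step at which it entered $\tuples(\ans{p})$, uses Lemma~\ref{lemma: sound1} to reconstruct \emph{backward} a single chain of subqueries $(\ovl{s}_0,\delta_0),\ldots,(\ovl{s}_{n_i},\delta_{n_i})$ along the body of $\varphi_i$, and then runs an inner induction on $j$ proving the $\theta$-parameterized meta-implication \eqref{eq: GHREX} (``for every $\theta$, if $P\cup I\models\V((B_{i,j}\land\cdots\land B_{i,n_i})\delta_{j-1}\theta)$ then $P\cup I\models\V(p(\ovl{s}_{j-1})\theta)$''), threading the two levels together via $\theta=\gamma_j\theta'$. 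You instead maintain a single \emph{forward}, step-indexed invariant over the entire run, whose part (I2) is the object-level entailment $P\cup I\models\V\big((B_{i,j}\land\cdots\land B_{i,n_i})\delta\supset p(\tpc)\big)$; this is strictly stronger than the paper's meta-implication for the corresponding subquery (it yields \eqref{eq: GHREX} by instantiating the valid universal closure at any $\theta$), and your inductive step is the same two facts the paper isolates: validity of $B_{i,j}\delta\gamma$ via an extensional fact in $I(p)$ or, in the intensional case, a previously derived answer tuple through (I1), plus the restriction identity $E(\delta\gamma)_{|\postVars(v)}=E\delta\gamma$ for $E=(B_{i,j+1},\ldots,B_{i,n_i})$, which holds because $\postVars(v)=\Var((B_{i,j+1},\ldots,B_{i,n_i}))$. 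What your formulation buys: the $\forall\theta$ substitution bookkeeping disappears (you apply $\gamma$ to a valid closed formula and discharge one conjunct), the delayed-processing branches of \emph{fire} (subqueries parked in $\subqueries(v)$ and later matched against $\unprocessedTuples(v)$) are covered uniformly since the invariant ranges over everything ever stored or transferred, and (I1) re-closes automatically at $\postFilter_i$ because $\postVars(\filter_{i,n_i})=\emptyset$. What the paper's decomposition buys: the argument stays local to one concrete chain per answer tuple, so no global invariant over the whole net need be stated, and the timing content of Lemma~\ref{lemma: sound1} (the witness answer tuple was present at transfer time, hence added strictly earlier) is explicit --- that same timing fact is exactly what makes your simultaneous appeal to (I1) inside the intensional case well-founded, as you correctly note.
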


\begin{proof}
We prove $P \cup I \models \V(p(\tpc))$ by induction on the number of the step at which $\tpc$ was added to $\tuples(\ans{p})$. Suppose $\tpc$ was added to $\tuples(\ans{p})$ as the result of transferring $\tpc$ through the edge $(\postFilter_i,\ans{p})$, which was triggered by the transfer of $(\tpc,\varepsilon)$ through the edge $(\filter_{i,n_i},\postFilter_i)$. 
Let $\ovl{s}_{n_i} = \tpc$ and $\delta_{n_i} = \varepsilon$. 
Let $v_0 = \preFilter_i$ and $v_j = \filter_{i,j}$ for $1 \leq j \leq n_i$. 
By Lemma~\ref{lemma: sound1}, for each $j$ from $n_i$ to $1$, there exists a subquery $(\ovl{s}_{j-1},\delta_{j-1})$ transferred through $(v_{j-1},v_j)$ such that:  
\begin{equation}\label{eq: HGWEQ}
\parbox{13cm}{if $\kind(v_j) = \edb$ and $\pred(v_j) = p_j$ then there exists $\tpc'_j \in I(p_j)$ such that $\atom(v_j)\delta_{j-1}$ is unifiable with a fresh variant of $p_j(\tpc'_j)$ by an mgu $\gamma_j$, $\ovl{s}_j=\ovl{s}_{j-1}\gamma_j$ and $\delta_j = (\delta_{j-1}\gamma_j)_{|\postVars(v_j)}$}
\end{equation}
\begin{equation}\label{eq: HHWEQ}
\parbox{13cm}{if $\kind(v_j) = \idb$ and $\pred(v_j) = p_j$ then there exists $\tpc'_j \in \tuples(\ans{p_j})$ such that $\atom(v_j)\delta_{j-1}$ is unifiable with a fresh variant of $p_j(\tpc'_j)$ by an mgu $\gamma_j$, $\ovl{s}_j = \ovl{s}_{j-1}\gamma_j$ and $\delta_j = (\delta_{j-1}\gamma_j)_{|\postVars(v_j)}$.}
\end{equation}

We have that $A_i\delta_0 = p(\ovl{s}_0)$. 
We prove by an inner induction on $1 \leq j \leq n_i + 1$ that:
\begin{equation}\label{eq: GHREX}
\parbox{11cm}{for every substitution $\theta$, if $P \cup I \models \V((B_{i,j} \land\ldots\land B_{i,n_i})\delta_{j-1}\theta)$ then $P \cup I \models \V(p(\ovl{s}_{j-1})\theta)$.}
\end{equation}

Base case ($j=1$): 
Since $P \cup I \models \V(\varphi_i)$, we have $P \cup I \models \V((B_{i,1} \land\ldots\land B_{i,n_i} \to A_i)\delta_0\theta)$. Hence, if $P \cup I \models \V((B_{i,1} \land\ldots\land B_{i,n_i})\delta_0\theta)$ then $P \cup I \models \V(A_i\delta_0\theta)$, which means $P \cup I \models \V(p(\ovl{s}_0)\theta)$.

\smallskip
Induction step: 
Suppose the induction hypothesis holds for $j \leq n_i$, i.e., 
\begin{equation}\label{eq: HJBRE}
\parbox{12cm}{for every $\theta$, if $P \cup I \models \V((B_{i,j} \land\ldots\land B_{i,n_i})\delta_{j-1}\theta)$ then $P \cup I \models \V(p(\ovl{s}_{j-1})\theta)$.}
\end{equation}
We show that it also holds for $j+1$, i.e.,  
\begin{equation}\label{eq: HJBRF}
\parbox{12cm}{for every $\theta'$, if $P \cup I \models \V((B_{i,j+1} \land\ldots\land B_{i,n_i})\delta_j\theta')$ then $P \cup I \models \V(p(\ovl{s}_j)\theta')$.}
\end{equation}
Suppose 
\begin{equation}\label{eq: KJTOS}
P \cup I \models \V((B_{i,j+1} \land\ldots\land B_{i,n_i})\delta_j\theta').
\end{equation}

Take $\theta = \gamma_j\theta'$. 

\begin{itemize}
\item Consider the case $\kind(v_j) = \edb$ and let $p_j = \pred(v_j)$. 
By~\eqref{eq: HGWEQ}, there exist $\tpc'_j \in I(p_j)$ and a fresh variant $\tpc''_j$ of $\tpc'_j$ such that $\gamma_j = mgu(B_{i,j}\delta_{j-1},p_j(\tpc''_j))$, $\ovl{s}_j=\ovl{s}_{j-1}\gamma_j$ and $\delta_j = (\delta_{j-1}\gamma_j)_{|\postVars(v_j)}$. We have $P \cup I \models \V(p_j(\tpc'_j))$, hence $P \cup I \models \V(p_j(\tpc''_j)\gamma_j)$, which means $P \cup I \models \V(B_{i,j}\delta_{j-1}\gamma_j)$. Hence $P \cup I \models \V(B_{i,j}\delta_{j-1}\gamma_j\theta')$, which means 
\begin{equation}\label{eq: HGREN}
P \cup I \models \V(B_{i,j}\delta_{j-1}\theta).
\end{equation}
Since $\delta_j = (\delta_{j-1}\gamma_j)_{|\postVars(v_j)}$ and $\theta = \gamma_j\theta'$, we have that 
\[ (B_{i,j+1} \land\ldots\land B_{i,n_i})\delta_j\theta' =
   (B_{i,j+1} \land\ldots\land B_{i,n_i})\delta_{j-1}\theta.
\]
This together with \eqref{eq: KJTOS}, \eqref{eq: HGREN} and \eqref{eq: HJBRE} implies $P \cup I \models \V(p(\ovl{s}_{j-1})\theta)$. Since $\ovl{s}_{j-1}\theta = \ovl{s}_{j-1}\gamma_j\theta' = \ovl{s}_j\theta'$, it follows that $P \cup I \models \V(p(\ovl{s}_j)\theta')$, which completes the proof of \eqref{eq: HJBRF} for the case $\kind(v_j) = \edb$. 

\item Consider the case $\kind(v_j) = \idb$ and let $p_j = \pred(v_j)$. 
By~\eqref{eq: HHWEQ}, there exist $\tpc'_j \in \tuples(\ans{p_j})$ and a fresh variant $\tpc''_j$ of $\tpc'_j$ such that $\gamma_j = mgu(B_{i,j}\delta_{j-1},p_j(\tpc''_j))$, $\ovl{s}_j=\ovl{s}_{j-1}\gamma_j$ and $\delta_j = (\delta_{j-1}\gamma_j)_{|\postVars(v_j)}$. By the inductive assumption of the outer induction, we have $P \cup I \models \V(p_j(\tpc'_j))$, hence $P \cup I \models \V(p_j(\tpc''_j)\gamma_j)$, which means $P \cup I \models \V(B_{i,j}\delta_{j-1}\gamma_j)$. Analogously as for the above case, we can derive that $P \cup I \models \V(p(\ovl{s}_j)\theta')$, which completes the proof of \eqref{eq: HJBRF} and \eqref{eq: GHREX}.
\end{itemize}

By \eqref{eq: GHREX}, when $j = n_i+1$ and $\theta = \varepsilon$, we have that $P \cup I \models \V(p(\ovl{s}_{n_i}))$, which means $P \cup I \models \V(p(\tpc))$.
\myend
\end{proof}


We need the following lemma for the completeness theorem. We assume that the sets of fresh variables used for renaming variables of input program clauses in SLD-refutations and in Algorithm~\ref{alg: QSQN} are disjoint. 

\begin{lemma} \label{lemma: comp1}
After a run of Algorithm~\ref{alg: QSQN} (using parameter $l$) on a query $(P,q(\ovl{x}))$ and an \edb\ instance $I$, for every \idb\ predicate $r$ of $P$, for every $\ovl{s} \in \tuples(\inp{r})$ and for every SLD-refutation of $P \cup I \cup \{\gets r(\ovl{s})\}$ that uses the leftmost selection function, does not contain any goal with term-depth greater than~$l$ and has a computed answer $\theta$ with the term-depth of $\ovl{s}\theta$ not greater than $l$, there exists $\ovl{s}' \in \tuples(\ans{r})$ such that $\ovl{s}\theta$ is an instance of a variant of~$\ovl{s}'$.
\end{lemma}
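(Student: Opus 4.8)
The plan is to prove the lemma by induction on the length $N$ of the given SLD-refutation of $P \cup I \cup \{\gets r(\ovl{s})\}$, mirroring (in the forward direction) the inner induction used in the soundness proof. Since the initial goal $\gets r(\ovl{s})$ has a single atom, its first resolution step must use a fresh variant of a program clause $\varphi_i = (A_i \gets B_{i,1},\ldots,B_{i,n_i})$ whose head $A_i$ has predicate $r$, with some mgu $\gamma_0$ of $r(\ovl{s})$ and $A_i$. Because the refutation uses the leftmost selection function, the refutation of $\gets (B_{i,1},\ldots,B_{i,n_i})\gamma_0$ decomposes into consecutive sub-refutations of $B_{i,1},\ldots,B_{i,n_i}$ under accumulating substitutions, each strictly shorter than the whole. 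First I would record the associated mgu's (for extensional subgoals) and computed answers (for intensional subgoals) $\gamma_1,\ldots,\gamma_{n_i}$, and set $\ovl{s}_j = \ovl{s}\gamma_0\cdots\gamma_j$ and $\delta_j = (\gamma_0\cdots\gamma_j)_{|\Var((B_{i,j+1},\ldots,B_{i,n_i}))}$, matching the meaning of subqueries recorded at the nodes $\filter_{i,j+1}$.

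The core is an inner induction on $j = 0,\ldots,n_i$ showing that after the run there is a subquery in $\subqueries(\filter_{i,j+1})$ that is more general w.r.t.\ $\filter_{i,j+1}$ than $(\ovl{s}_j,\delta_j)$, and for $j = n_i$ that a tuple at least as general as $\ovl{s}_{n_i}$ is transferred through $(\postFilter_i,\ans{r})$. For the base case $j = 0$: since $\ovl{s} \in \tuples(\inp{r})$, the tuple $\ovl{s}$ was placed in $\unprocessed(\inp{r},\preFilter_i)$, and since the algorithm halts only when no edge is active, that edge was fired; the $\inp{p}$ branch of transfer then transfers the subquery $(\ovl{s}\gamma_0,(\gamma_0)_{|\postVars(\preFilter_i)})$ to $\filter_{i,1}$, where add-subquery, together with the hypothesis that the goals of the refutation have term-depth $\leq l$ (so the $\termDepth \leq l$ tests pass), leaves in $\subqueries(\filter_{i,1})$ a subquery at least as general as $(\ovl{s}_0,\delta_0)$. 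For the inductive step, if $B_{i,j+1}$ is extensional with predicate $p$, the sub-refutation is a single unification of $B_{i,j+1}\delta_j$ with a fact $p(\ovl{c}) \in I(p)$; the corresponding branch of fire/transfer tries exactly this fact, and since the stored subquery is more general than $(\ovl{s}_j,\delta_j)$, the produced subquery is more general than $(\ovl{s}_{j+1},\delta_{j+1})$.

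If instead $B_{i,j+1}$ is intensional with predicate $p$, then $\filter_{i,j+1}$ transfers the input tuple $\ovl{t}'$ with $p(\ovl{t}') = B_{i,j+1}\delta_j$ (or a more general one) through $(\filter_{i,j+1},\inp{p})$, so a variant of it or a more general tuple lies in $\tuples(\inp{p})$. I then apply the outer induction hypothesis to the strictly shorter sub-refutation of $\gets B_{i,j+1}\gamma_0\cdots\gamma_j$, after using the Lifting Lemma to transport that sub-refutation so that it starts from the (possibly more general) tuple actually present in $\tuples(\inp{p})$. This yields an answer tuple $\ovl{s}'' \in \tuples(\ans{p})$ of which the relevant instance is a variant; the answer-propagation machinery ($\unprocessed(\ans{p},\filter_{i,j+1})$, then $\unprocessedTuples$, combined with the stored $\subqueries$, all emptied at termination) then unifies $B_{i,j+1}\delta_j$ with a fresh variant of $p(\ovl{s}'')$, and because the net computes a most general unifier this is more general than the computed answer of the sub-refutation, producing a subquery at least as general as $(\ovl{s}_{j+1},\delta_{j+1})$. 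Taking $j = n_i$, the subquery reaching $\postFilter_i$ is, up to generalization, $(\ovl{s}_{n_i},\varepsilon)$, so through $(\postFilter_i,\ans{r})$ and add-tuple a tuple at least as general as $\ovl{s}_{n_i}$ is added to $\tuples(\ans{r})$; since $\ovl{s}\theta = \ovl{s}_{n_i}$ up to the final renaming, $\ovl{s}\theta$ is an instance of a variant of some $\ovl{s}' \in \tuples(\ans{r})$, as required. Throughout I would use transitivity of the ``more general than'' relation and two auxiliary observations: that keeping only most general subqueries/tuples never loses a subquery except by replacing it with a more general one, and that at termination every unprocessed set is empty, so every enabled transfer has been carried out.

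I expect the main obstacle to be the substitution bookkeeping in the intensional case: reconciling the fresh variants used in the SLD-refutation with the disjoint fresh variants used by the algorithm (which is exactly why the disjointness assumption and the Lifting Lemma are needed), and proving that the most general unifier computed by the net against the stored answer tuple dominates the computed answer of the sub-refutation. This is where idempotence and the most-general properties of the unifiers must be combined carefully, and where each application of the $\termDepth \leq l$ test must be checked against the hypothesis bounding the term-depths of the goals and of $\ovl{s}\theta$, so that no relevant subquery or tuple is discarded.
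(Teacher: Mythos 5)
Your proposal follows essentially the same route as the paper's proof: an outer induction on the length of the refutation, an inner induction along the clause body maintaining that the subquery produced by the net dominates (is more general than) the refutation's instantiated pair $(\ovl{s}_j,\delta_j)$, the Lifting Lemma used both to lift each sub-refutation to the more general atom held by the net and, in the intensional case, to pass to the more general tuple actually in $\tuples(\inp{p})$ before invoking the outer induction hypothesis, with subsumption absorbed by transitivity of ``more general'' --- exactly the content of the paper's Remark~\ref{remark: HGDSX} and its two cases. One small repair: state the inner invariant for subqueries \emph{transferred} through the edge into $\filter_{i,j}$ (as the paper does) rather than stored in $\subqueries(\filter_{i,j})$, since for \edb nodes with $T(v) = \mathit{false}$ that set is kept empty by definition.
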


\begin{proof}
We prove this lemma by induction on the length of the mentioned SLD-refutation. Let $\theta_1,\ldots,\theta_h$ be the sequence of mgu's used in the refutation. We have that $r(\ovl{s})\theta_1\ldots\theta_h = r(\ovl{s})\theta$.
Suppose that the first step of the refutation of $P \cup I \cup \{\gets r(\ovl{s})\}$ uses an input program clause $\varphi'_i = (A'_i \gets B'_{i,1},\ldots,B'_{i,n_i})$, which is a variant of a program clause $\varphi_i = (A_i \gets B_{i,1},\ldots,B_{i,n_i})$ of $P$, resulting in the resolvent $\gets (B'_{i,1},\ldots,B'_{i,n_i})\theta_1$. Let $k_1 = 2$, $k_{n_i+1} = h+1$ and suppose that, for $1 \leq j \leq n_i$, 
\begin{equation}\label{eq: ABFED}
\parbox{11cm}{the fragment for processing $\gets B'_{i,j}\theta_1\ldots\theta_{k_j-1}$ of the refutation of $P \cup I \cup \{\gets r(\ovl{s})\}$ uses mgu's $\theta_{k_j},\ldots,\theta_{k_{j+1}-1}$.}
\end{equation}
Thus, after processing the atom $B'_{i,j-1}$ for $2 \leq j \leq n_i+1$, the next goal of the refutation of $\gets r(\ovl{s})$ is $\gets (B'_{i,j},\ldots,B'_{i,n_i})\theta_1\ldots\theta_{k_j-1}$. (If $j = n_i+1$ then the goal is empty.) 

Let $\varrho$ be a renaming substitution such that $\varphi'_i = \varphi_i\varrho$. Thus, $B'_{i,j} = B_{i,j}\varrho$ for $1 \leq j \leq n_i$. 
We can assume that $\varrho$ does not use any variable occurring in $\ovl{s}$. 


We will refer to the data structures used by Algorithm~\ref{alg: QSQN}. 

We first prove the following remark:

\begin{remark}\label{remark: HGDSX}
Let $1 \leq j \leq n_i$, $v = \filter_{i,j}$, $u = \filter_{i,j-1}$ if $j > 1$, and $u = \preFilter_i$ otherwise. If $(\tpc_{j-1},\delta_{j-1})$ is a subquery transferred through $(u,v)$ at some step and there exists a substitution $\eta$ such that 
\begin{equation} \label{eq: HGQSL}
(A_i,(B_{i,j},\ldots,B_{i,n_i}))\varrho\theta_1\ldots\theta_{k_j-1} = (r(\tpc_{j-1}),(B_{i,j},\ldots,B_{i,n_i})\delta_{j-1})\eta
\end{equation}
then there exist a subquery $(\tpc_j,\delta_j)$ transferred through $(v,succ(v))$ at some step and a substitution $\eta'$ such that 
\begin{equation} \label{eq: UYFDM}
(A_i,(B_{i,j+1},\ldots,B_{i,n_i}))\varrho\theta_1\ldots\theta_{k_{j+1}-1} = (r(\tpc_j),(B_{i,j+1},\ldots,B_{i,n_i})\delta_j)\eta'
\end{equation}
\end{remark}

Suppose the premises of this remark hold. 
Without loss of generality we assume that:
\begin{equation}\label{eq: LQYFS}
\parbox{12cm}{if ($\kind(v) = \edb$ and $T(v) = true$) or $\kind(v) = \idb$ then the subquery $(\tpc_{j-1},\delta_{j-1})$ was added to $\subqueries(v)$.}
\end{equation}

Since $B'_{i,j} = B_{i,j}\varrho$ and \eqref{eq: HGQSL}, we have that:
\begin{equation} \label{eq: LSJKF}
(\gets B'_{i,j}\theta_1\ldots\theta_{k_j-1}) = (\gets B_{i,j}\varrho\theta_1\ldots\theta_{k_j-1}) = (\gets B_{i,j}\delta_{j-1}\eta).
\end{equation}
Since the term-depth of $B_{i,j}\delta_{j-1}\eta = B'_{i,j}\theta_1\ldots\theta_{k_j-1}$ is not greater than $l$, the term-depth of $B_{i,j}\delta_{j-1}$ is also not greater than $l$. By \eqref{eq: ABFED}, \eqref{eq: LSJKF} and Lifting Lemma~\ref{lifting lemma}, we have that
\begin{equation} \label{eq: HGERI}
\parbox{12cm}{
there exists a refutation of $P \cup I \cup \{\gets B_{i,j}\delta_{j-1}\}$ using the leftmost selection function and mgu's $\theta'_{k_j},\ldots,\theta'_{k_{j+1}-1}$ such that the term-depths of goals are not greater than $l$ and $\eta\theta_{k_j}\ldots\theta_{k_{j+1}-1} = \theta'_{k_j}\ldots\theta'_{k_{j+1}-1}\mu$ for some substitution~$\mu$.
}
\end{equation}

Consider the case when the predicate $p = \pred(v)$ of $B_{i,j}$ is an extensional predicate.

Thus, 
\begin{equation} \label{eq: MDFRQ}
k_{j+1} = k_j + 1 
\end{equation}
and 
\begin{equation} \label{eq: DFUSW}
B_{i,j}\delta_{j-1}\theta'_{k_j} = p(\tpc')\sigma\theta'_{k_j} 
\end{equation}
where $p(\tpc')\sigma$ is the input program clause used for resolving $\gets B_{i,j}\delta_{j-1}$, with $\tpc' \in I(p)$ and $\sigma$ being a renaming substitution. 
Regarding the transfer of the subquery $(\tpc_{j-1},\delta_{j-1})$ through $(u,v)$, under the assumption~\eqref{eq: LQYFS}, Algorithm~\ref{alg: QSQN} unifies $\atom(v)\delta_{j-1} = B_{i,j}\delta_{j-1}$ with a fresh variant $p(\tpc')\sigma'$ of $p(\tpc')$, where $\sigma'$ is a renaming substitution, resulting in an mgu $\gamma$ (by~\eqref{eq: DFUSW}, $B_{i,j}\delta_{j-1}$ and $p(\tpc')\sigma'$ are unifiable, which is also justified below) and then transfers the subquery $(\tpc_{j-1}\gamma,(\delta_{j-1}\gamma)_{|\postVars(v)})$ through $(v,succ(v))$. 
Let 
\begin{equation} \label{eq: HJRTS}
\tpc_j = \tpc_{j-1}\gamma\;\;\textrm{and}\;\;
\delta_j = (\delta_{j-1}\gamma)_{|\postVars(v)}. 
\end{equation}


We have that $\sigma = \sigma'\sigma''$ for some renaming substitution $\sigma''$ such that 
\begin{equation} \label{eq: HGERP}
\textrm{$\sigma''$ does not use variables of $\tpc_{j-1}$, $\delta_{j-1}$ and $\preVars(v)$}.
\end{equation}
Thus $B_{i,j}\delta_{j-1}\sigma''\theta'_{k_j} = B_{i,j}\delta_{j-1}\theta'_{k_j}$, and by \eqref{eq: DFUSW} and the fact $\sigma = \sigma'\sigma''$, we have that
\[ (B_{i,j}\delta_{j-1})\sigma''\theta'_{k_j} = B_{i,j}\delta_{j-1}\theta'_{k_j} = p(\tpc')\sigma\theta'_{k_j} = (p(\tpc')\sigma')\sigma''\theta'_{k_j}.\] 
Hence, $B_{i,j}\delta_{j-1}$ and $p(\tpc')\sigma'$ are unifiable using $\sigma''\theta'_{k_j}$, while $\gamma$ is an mgu for them. Hence 
\begin{equation} \label{eq: HJERW}
\sigma''\theta'_{k_j} = \gamma\mu'
\end{equation}
for some substitution $\mu'$. 
Let $\eta' = \mu'\mu$. 
We have that:
\[
\begin{array}{ll}
& (A_i,(B_{i,j+1},\ldots,B_{i,n_i}))\varrho\theta_1\ldots\theta_{k_{j+1}-1} \\
= & ((A_i,(B_{i,j+1},\ldots,B_{i,n_i}))\varrho\theta_1\ldots\theta_{k_j-1})\theta_{k_j}\ldots\theta_{k_{j+1}-1} \\
= & (r(\tpc_{j-1}),(B_{i,j+1},\ldots,B_{i,n_i})\delta_{j-1})\eta\theta_{k_j}\ldots\theta_{k_{j+1}-1} \;\;(\mbox{by the assumption \eqref{eq: HGQSL}}) \\
= & (r(\tpc_{j-1}),(B_{i,j+1},\ldots,B_{i,n_i})\delta_{j-1})\theta'_{k_j}\ldots\theta'_{k_{j+1}-1}\mu \;\;(\mbox{by } \eqref{eq: HGERI}) \\
= & (r(\tpc_{j-1}),(B_{i,j+1},\ldots,B_{i,n_i})\delta_{j-1})\sigma''\theta'_{k_j}\ldots\theta'_{k_{j+1}-1}\mu \;\;(\textrm{by \eqref{eq: HGERP}}) \\
= & (r(\tpc_{j-1}),(B_{i,j+1},\ldots,B_{i,n_i})\delta_{j-1})\gamma\mu'\mu \;\;(\mbox{by } \eqref{eq: MDFRQ} \mbox{ and } \eqref{eq: HJERW}) \\
= & (r(\tpc_j),(B_{i,j+1},\ldots,B_{i,n_i})\delta_j)\eta' \;\;(\textrm{by \eqref{eq: HJRTS} and the fact $\eta' = \mu'\mu$}).
\end{array}
\]
We have shown~\eqref{eq: UYFDM} and thus proved Remark~\ref{remark: HGDSX} for the case when the predicate of $B_{i,j}$ is extensional.


Now consider the case when the predicate $p$ of $B_{i,j}$ is an intensional predicate.

By the assumption~\eqref{eq: LQYFS}, the subquery $(\tpc_{j-1},\delta_{j-1})$ was also added to $\unprocessedSQ(v)$.
Let $B_{i,j}\delta_{j-1} = p(\tpc')$. There must exist some tuple $\tpc$ more general than $\tpc'$ that was added to $\tuples(\inp{p})$ at some step. Let $\tpc\alpha = \tpc'$ for some substitution $\alpha$ that uses only variables from $\tpc$ and $\tpc'$. Thus,  
\begin{equation} \label{eq: GFYQS}
B_{i,j}\delta_{j-1} = p(\tpc)\alpha
\end{equation}

By \eqref{eq: HGERI} and Lifting Lemma~\ref{lifting lemma}, it follows that there exists a refutation of $P \cup I \cup \{\gets p(\tpc)\}$ using the leftmost selection function and mgu's $\theta''_{k_j},\ldots,\theta''_{k_{j+1}-1}$ such that the term-depths of the goals are not greater than $l$ and 
\begin{equation} \label{eq: AKFTS}
\alpha\theta'_{k_j}\ldots\theta'_{k_{j+1}-1} = \theta''_{k_j}\ldots\theta''_{k_{j+1}-1}\beta
\end{equation}
for some substitution $\beta$. 
By the inductive assumption, $\tuples(\ans{p})$ contains a tuple $\tpc''$ such that $\tpc\theta''_{k_j}\ldots\theta''_{k_{j+1}-1}$ is an instance of a variant of $\tpc''$. Since 
\[
\begin{array}{rcll}
B_{i,j}\delta_{j-1}\theta'_{k_j}\ldots\theta'_{k_{j+1}-1} & = & p(\tpc)\alpha\theta'_{k_j}\ldots\theta'_{k_{j+1}-1}\;\;\; & \textrm{(by \eqref{eq: GFYQS})} \\
& = & p(\tpc)\theta''_{k_j}\ldots\theta''_{k_{j+1}-1}\beta & \textrm{(by \eqref{eq: AKFTS}),} 
\end{array}
\]
it follows that 
\begin{equation} \label{eq: RHAKR}
B_{i,j}\delta_{j-1}\theta'_{k_j}\ldots\theta'_{k_{j+1}-1} \textrm{ is an instance of a variant of $p(\tpc'')$}.
\end{equation}

From certain moment there were both $(\tpc_{j-1},\delta_{j-1}) \in \subqueries(v)$ and $\tpc'' \in \tuples(\ans{p})$. Hence, at some step Algorithm~\ref{alg: QSQN} unified $\atom(v)(\delta_{j-1}) = B_{i,j}\delta_{j-1}$ with a fresh variant $p(\tpc'')\sigma$ of $p(\tpc'')$, where $\sigma$ is a renaming substitution. The atom $p(\tpc'')\sigma$ does not contain variables of $\tpc_{j-1}$, $\delta_{j-1}$, $\preVars(v)$ and $\theta'_{k_j}\ldots\theta'_{k_{j+1}-1}$. By~\eqref{eq: RHAKR}, $B_{i,j}\delta_{j-1}$ and $p(\tpc'')\sigma$ are unifiable. Let the resulting mgu be $\gamma$ and let 
\begin{equation} \label{eq: GKWTX}
\tpc_j = \tpc_{j-1}\gamma\;\;\textrm{and}\;\;
\delta_j = (\delta_{j-1}\gamma)_{|\postVars(v)}.
\end{equation}
Algorithm~\ref{alg: QSQN} then transferred the subquery $(\tpc_j,\delta_j)$ through $(v,succ(v))$.  

By \eqref{eq: RHAKR}, $B_{i,j}\delta_{j-1}\theta'_{k_j}\ldots\theta'_{k_{j+1}-1}$ is an instance of $p(\tpc'')\sigma$.
Let $\rho$ be a substitution with domain contained in $\Var(p(\tpc'')\sigma)$ such that $B_{i,j}\delta_{j-1}\theta'_{k_j}\ldots\theta'_{k_{j+1}-1} = p(\tpc'')\sigma\rho$. 
We have that 
\begin{equation} \label{eq: HREPS}
\parbox{12cm}{the domain of $\rho$ does not contain variables of $\tpc_{j-1}$, $\delta_{j-1}$, $\preVars(v)$ and $\theta'_{k_j}\ldots\theta'_{k_{j+1}-1}$}
\end{equation}
and $\theta'_{k_j}\ldots\theta'_{k_{j+1}-1} \cup \rho$ is a unifier for $B_{i,j}\delta_{j-1}$ and $p(\tpc'')\sigma$.
As $\gamma$ is an mgu for $B_{i,j}\delta_{j-1}$ and $p(\tpc'')\sigma$, we have that 
\begin{equation} \label{eq: GSKQX}
\gamma\mu' = (\theta'_{k_j}\ldots\theta'_{k_{j+1}-1} \cup \rho)
\end{equation}
for some substitution $\mu'$. 
Let $\eta' = \mu'\mu$.
We have that:
\[
\begin{array}{ll}
& (A_i,(B_{i,j+1},\ldots,B_{i,n_i}))\varrho\theta_1\ldots\theta_{k_{j+1}-1} \\
= & (r(\tpc_{j-1}),(B_{i,j+1},\ldots,B_{i,n_i})\delta_{j-1})\theta'_{k_j}\ldots\theta'_{k_{j+1}-1}\mu \;\;\textrm{(as shown before)} \\
= & (r(\tpc_{j-1}),(B_{i,j+1},\ldots,B_{i,n_i})\delta_{j-1})(\theta'_{k_j}\ldots\theta'_{k_{j+1}-1} \cup \rho)\mu \;\;\textrm{(by \eqref{eq: HREPS})} \\
= & (r(\tpc_{j-1}),(B_{i,j+1},\ldots,B_{i,n_i})\delta_{j-1})\gamma\mu'\mu \;\;\textrm{(by \eqref{eq: GSKQX})} \\
= & (r(\tpc_j),(B_{i,j+1},\ldots,B_{i,n_i})\delta_j)\eta' \;\;\textrm{(by \eqref{eq: GKWTX} and the fact $\eta' = \mu'\mu$)}.
\end{array}
\]
We have shown~\eqref{eq: UYFDM} and thus proved Remark~\ref{remark: HGDSX} for the case when the predicate of $B_{i,j}$ is intensional.
This completes the proof of this remark.


Recall that $r(\ovl{s})\varrho = r(\ovl{s})$. Since $\theta_1 = mgu(r(\ovl{s}),A'_i)$ and $A'_i = A_i\varrho$, it follows that $r(\ovl{s})\varrho\theta_1 = r(\ovl{s})\theta_1 = A'_i\theta_1 = A_i\varrho\theta_1$ and hence $\varrho\theta_1$ is a unifier for $r(\ovl{s})$ and $A_i$. Let $\gamma_0$ be the mgu Algorithm~\ref{alg: QSQN} uses to unify $r(\ovl{s})$ with $A_i$. Thus, $\gamma_0\eta_0 = \varrho\theta_1$ for some substitution $\eta_0$. Moreover, $(\tpc_0,\delta_0) = (\ovl{s}\gamma_0,(\gamma_0)_{|\preVars(\filter_{i,1})})$ is a subquery Algorithm~\ref{alg: QSQN} transferred through $(\preFilter_i,\filter_{i,1})$. Recall that $k_1 = 2$ and observe that the premises of Remark~\ref{remark: HGDSX} hold for $j = 1$ and for the subquery $(\tpc_0,\delta_0)$ using $\eta = \eta_0$. Hence there exist a subquery $(\tpc_1,\delta_1)$ transferred through $(\filter_{i,1},\filter_{i,2})$ at some step and a substitution $\eta_1$ such that 
\[ (A_i,(B_{i,2},\ldots,B_{i,n_i}))\varrho\theta_1\ldots\theta_{k_2-1} = (r(\tpc_1),(B_{i,2},\ldots,B_{i,n_i})\delta_1)\eta_1. \]

For each $1 < j \leq n_i$, we can apply Remark~\ref{remark: HGDSX} to obtain a subquery $(\tpc_j,\delta_j)$ and $\eta_j$ (for $\eta'$). Since $\postVars(\filter_{i,n_i}) = \emptyset$, it follows that, for $j = n_i$, we have that $(\tpc_{n_i},\varepsilon)$ is a subquery transferred through $(\filter_{i,n_i},\postFilter_i)$ at some step and 
\[ A_i\varrho\theta_1\ldots\theta_{k_{n_i+1}-1} = r(\tpc_{n_i})\eta_{n_i}.\]

Since $k_{n_i+1} = h+1$ and $\theta = (\theta_1\ldots\theta_h)_{|\Var(\ovl{s})}$, it follows that 
\[ r(\ovl{s})\theta = r(\ovl{s})\theta_1\ldots\theta_h
	= A'_i\theta_1\ldots\theta_h
	= A_i\varrho\theta_1\ldots\theta_h
	= r(\tpc_{n_i})\eta_{n_i}.
\]
Thus, $\ovl{s}\theta$ is an instance of $\tpc_{n_i}$. Since $(\tpc_{n_i},\varepsilon)$ was transferred through $(\filter_{i,n_i},\postFilter_i)$, $\tuples(\ans{r})$ will contain $\ovl{s}'$ such that $\tpc_{n_i}$ is an instance of a variant of $\ovl{s}'$. Clearly, $\ovl{s}\theta$ is also an instance of that variant of $\ovl{s}'$. This completes the proof. 
\myend
\end{proof}


\begin{theorem}[Completeness] \label{theorem: comp}
After a run of Algorithm~\ref{alg: QSQN} (using parameter $l$) on a query $(P,q(\ovl{x}))$ and an \edb\ instance $I$, for every SLD-refutation of \mbox{$P \cup I \cup \{\gets q(\ovl{x})\}$} that uses the leftmost selection function, does not contain any goal with term-depth greater than~$l$ and has a computed answer $\theta$ with term-depth not greater than $l$, there exists \mbox{$\ovl{s} \in \tuples(\ans{q})$} such that $\ovl{x}\theta$ is an instance of a variant of~$\ovl{s}$.
\myend
\end{theorem}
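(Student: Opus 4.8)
The plan is to obtain this theorem as an almost immediate corollary of Lemma~\ref{lemma: comp1}, using the fact that Algorithm~\ref{alg: QSQN} initializes $\tuples(\inp{q})$ with a fresh variant $\ovl{x}'$ of $\ovl{x}$. Lemma~\ref{lemma: comp1} already covers refutations of $P \cup I \cup \{\gets r(\ovl{s})\}$ for an arbitrary $\ovl{s} \in \tuples(\inp{r})$ with $r$ \idb. Since $q$ is \idb\ and $\ovl{x}' \in \tuples(\inp{q})$ holds after the run, it suffices to transport the given refutation of $P \cup I \cup \{\gets q(\ovl{x})\}$ into a refutation of $P \cup I \cup \{\gets q(\ovl{x}')\}$ meeting the hypotheses of the lemma, and then to translate its conclusion back from $\ovl{x}'$ to $\ovl{x}$.

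First I would fix a renaming substitution $\rho$ with $\ovl{x}' = \ovl{x}\rho$, so that $q(\ovl{x}) = q(\ovl{x}')\rho^{-1}$, i.e.\ $\gets q(\ovl{x})$ is the instance $(\gets q(\ovl{x}'))\rho^{-1}$. Invoking Lifting Lemma~\ref{lifting lemma} with $G =\ \gets q(\ovl{x}')$ and substitution $\rho^{-1}$, the given refutation of $P \cup I \cup \{\gets q(\ovl{x})\}$ (with mgu's $\theta_1,\ldots,\theta_h$) yields a refutation of $P \cup I \cup \{\gets q(\ovl{x}')\}$ that uses the same input clauses and selected atoms, is therefore still leftmost, has all goal term-depths bounded by $l$, and has mgu's $\theta'_1,\ldots,\theta'_n$ with $\rho^{-1}\theta_1\ldots\theta_h = \theta'_1\ldots\theta'_n\gamma$ for some $\gamma$. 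The variable-disjointness precondition of the Lifting Lemma holds because the variables of $\ovl{x}'$ are kept disjoint from those used for renaming input clauses in refutations, by the standing assumption stated just before Lemma~\ref{lemma: comp1}.

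Next I would verify the remaining hypothesis of Lemma~\ref{lemma: comp1}, namely that the computed answer $\theta'$ of this lifted refutation satisfies $\termDepth(\ovl{x}'\theta') \leq l$. Applying the identity $\rho^{-1}\theta_1\ldots\theta_h = \theta'_1\ldots\theta'_n\gamma$ to $\ovl{x}'$ and using $\ovl{x}'\rho^{-1} = \ovl{x}$ gives $\ovl{x}\theta = \ovl{x}'\theta'\gamma$, so $\ovl{x}\theta$ is an instance of $\ovl{x}'\theta'$; since $\termDepth(\ovl{x}\theta) \leq l$ by hypothesis and instances never decrease term-depth, this forces $\termDepth(\ovl{x}'\theta') \leq l$. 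Now all hypotheses of Lemma~\ref{lemma: comp1} hold for $r = q$ and $\ovl{s} = \ovl{x}'$, so the lemma supplies a tuple $\ovl{s} \in \tuples(\ans{q})$ with $\ovl{x}'\theta'$ an instance of a variant of $\ovl{s}$. Finally, because $\ovl{x}\theta = \ovl{x}'\theta'\gamma$ is itself an instance of $\ovl{x}'\theta'$, transitivity of the ``instance of'' relation gives that $\ovl{x}\theta$ is an instance of a variant of $\ovl{s}$, which is exactly the claim.

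Essentially all of the difficulty has already been discharged inside Lemma~\ref{lemma: comp1}, so little remains beyond bookkeeping about the variant $\ovl{x}'$. The one point requiring genuine care is the application of the Lifting Lemma: checking that its variable-disjointness precondition is met for $\ovl{x}'$ and $\rho^{-1}$, and correctly propagating the computed-answer identity $\ovl{x}\theta = \ovl{x}'\theta'\gamma$, from which both the term-depth bound on $\ovl{x}'\theta'$ and the final ``instance of a variant'' conclusion follow. Everything else is routine.
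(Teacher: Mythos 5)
Your proposal is correct and follows essentially the same route as the paper, which simply states that the theorem ``immediately follows from Lemma~\ref{lemma: comp1}'' after the algorithm puts the fresh variant $\ovl{x}'$ of $\ovl{x}$ into $\tuples(\inp{q})$. You merely make explicit the routine bookkeeping the paper leaves implicit---lifting the given refutation of $\gets q(\ovl{x})$ to one of $\gets q(\ovl{x}')$ via the renaming and Lemma~\ref{lifting lemma}, checking the term-depth hypothesis on $\ovl{x}'\theta'$, and translating the conclusion back through $\ovl{x}\theta = \ovl{x}'\theta'\gamma$---all of which is sound.
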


This theorem immediately follows from Lemma~\ref{lemma: comp1}. Together with Theorem~\ref{theorem: SLD soundness and completeness} (on completeness of SLD-resolution) it makes a relationship between correct answers of $P \cup I \cup \{\gets q(\ovl{x})\}$ and the answers computed by Algorithm~\ref{alg: QSQN} for the query $(P,q(\ovl{x}))$ on the \edb\ instance $I$. 

For queries and \edb\ instances without function symbols, we take term-depth bound $l = 0$ and obtain the following completeness result, which immediately follows from the above theorem.

\begin{corollary}
\label{cor: comp}
After a run of Algorithm~\ref{alg: QSQN} using $l = 0$ on a query $(P,q(\ovl{x}))$ and an \edb\ instance $I$ that do not contain function symbols, for every computed answer $\theta$ of an SLD-refutation of $P \cup I \cup \{\gets q(\ovl{x})\}$ that uses the leftmost selection function, there exists $\tpc \in \tuples(\ans{q})$ such that $\ovl{x}\theta$ is an instance of a variant of~$\tpc$.
\myend
\end{corollary}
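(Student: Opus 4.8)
The plan is to obtain this theorem as a direct specialization of Lemma~\ref{lemma: comp1} to the distinguished predicate $q$ and the tuple that the algorithm itself inserts into $\tuples(\inp{q})$. Recall that $q$ is an \idb\ predicate, and that Algorithm~\ref{alg: QSQN} begins by setting $\tuples(\inp{q}) := \{\ovl{x}'\}$, where $\ovl{x}'$ is a fresh variant of $\ovl{x}$; thus $\ovl{x}' \in \tuples(\inp{q})$ at the end of the run. These are precisely the hypotheses that Lemma~\ref{lemma: comp1} requires of $r$ and $\ovl{s}$, so the only thing I must still produce is an SLD-refutation of $P \cup I \cup \{\gets q(\ovl{x}')\}$ of the shape the lemma demands.

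First I would transport the given SLD-refutation of $P \cup I \cup \{\gets q(\ovl{x})\}$ to one of $P \cup I \cup \{\gets q(\ovl{x}')\}$. Since $\ovl{x}' = \ovl{x}\rho$ for a renaming substitution $\rho$, I rename the initial goal by $\rho$ and propagate $\rho$ through the derivation, renaming the input clause variants with fresh SLD-side variables that are disjoint from the algorithm's pool (as assumed before Lemma~\ref{lemma: comp1}). Renaming preserves the leftmost selection function and the term-depth of every goal, so the transported refutation again contains no goal of term-depth greater than $l$. Its computed answer $\theta'$ satisfies that $\ovl{x}'\theta'$ is a variant of $\ovl{x}\theta$, so the two have equal term-depth; and because $\ovl{x}$ is a tuple of variables, the term-depth of $\ovl{x}\theta$ is at most that of $\theta$, hence $\leq l$. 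Therefore the hypothesis ``the term-depth of $\ovl{s}\theta'$ is not greater than $l$'' of Lemma~\ref{lemma: comp1} holds for $\ovl{s} = \ovl{x}'$.

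Applying Lemma~\ref{lemma: comp1} with $r := q$, $\ovl{s} := \ovl{x}'$ and computed answer $\theta'$ then yields a tuple $\ovl{s}' \in \tuples(\ans{q})$ such that $\ovl{x}'\theta'$ is an instance of a variant of $\ovl{s}'$. It remains to push this back to $\ovl{x}\theta$. Writing $\ovl{x}'\theta' = \ovl{s}''\tau$ with $\ovl{s}''$ a variant of $\ovl{s}'$, and composing $\tau$ with the renaming that carries $\ovl{x}'\theta'$ to its variant $\ovl{x}\theta$, I obtain $\ovl{x}\theta = \ovl{s}''\tau'$ for a suitable $\tau'$. Hence $\ovl{x}\theta$ is an instance of $\ovl{s}''$, which is a variant of $\ovl{s}'$, i.e.\ $\ovl{x}\theta$ is an instance of a variant of $\ovl{s}'$, exactly as the theorem asserts.

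I do not expect any genuine obstacle: the substance of completeness is already discharged inside Lemma~\ref{lemma: comp1}, whose induction on refutation length (through Remark~\ref{remark: HGDSX} and the Lifting Lemma~\ref{lifting lemma}) carries the real argument. The only points requiring care are bookkeeping points around the fresh variant $\ovl{x}'$ — that renaming the initial goal preserves both the term-depth bound and the disjointness-of-fresh-variables convention, and that ``a variant of an instance of a variant'' collapses to ``an instance of a variant''. These are routine, which is why the theorem follows immediately from the lemma.
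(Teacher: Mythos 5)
Your route is exactly the paper's route: the paper obtains this corollary immediately from Theorem~\ref{theorem: comp}, which in turn is the instantiation of Lemma~\ref{lemma: comp1} at $r := q$ and $\ovl{s} := \ovl{x}'$; you inline that chain and, usefully, spell out the renaming bookkeeping between $\ovl{x}$ and the fresh variant $\ovl{x}'$ placed in $\tuples(\inp{q})$, together with the ``variant of an instance of a variant collapses to an instance of a variant'' step, both of which the paper leaves implicit. The one omission is that you assume rather than derive the term-depth hypotheses of Lemma~\ref{lemma: comp1}. Unlike Theorem~\ref{theorem: comp}, the corollary does not hypothesize that the refutation contains no goal of term-depth greater than $l$, nor that $\theta$ has term-depth at most $l$ --- dropping these side conditions is precisely what the corollary adds --- yet your phrases ``the transported refutation \emph{again} contains no goal of term-depth greater than $l$'' and ``the term-depth of $\ovl{x}\theta$ is at most that of $\theta$, hence $\leq l$'' presuppose both for the given refutation. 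The fix is one line, by induction along the derivation: since $P$, $I$ and $q(\ovl{x})$ contain no function symbols, every input clause is function-free, every mgu of function-free atoms is function-free, hence every goal and the composition $\theta_1\ldots\theta_n$ (and so the computed answer $\theta$) have term-depth $0 = l$. With that observation added, your argument is complete and coincides with the paper's.
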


\section{Data Complexity}
\label{section: complexity}

In this subsection we estimate the {\em data complexity} of Algorithm~\ref{alg: QSQN}, which is measured w.r.t.\ the size of the \edb\ instance $I$ when the query $(P,q(\ovl{x}))$ and the term-depth bound $l$ are fixed.

If terms are represented as sequences of symbols or as trees then there will be a problem with complexity. Namely, unifying the terms $f(x_1,\ldots,x_n)$ and $f(g(x_0,x_0),\ldots,g(x_{n-1},x_{n-1}))$, we get a term of exponential length.\footnote{Another example is the pair $f(x_1,\ldots,x_n,x_1,\ldots,x_n)$ and $f(y_1,\ldots,y_n,g(y_0,y_0),\ldots,g(y_{n-1},y_{n-1}))$.} If the term-depth bound $l$ is used in all steps, including the ones of unification, then the problem will not arise. But we do not want to be so restrictive. 

To represent a term we use instead a rooted acyclic directed graph which is permitted to have multiple ordered arcs and caches nodes representing the same subterm. Such a graph will simply be called a DAG. As an example, the DAG of $f(x,a,x)$ has the root $n_f$ labeled by $f$, a node $n_x$ labeled by $x$, a node $n_a$ labeled by $a$, and three ordered edges outgoing from $n_f\,$: the first one and the third one are connected to $n_x$, while the second one is connected to $n_a$. 

The {\em size of a term $t$}, denoted by $size(t)$, is defined to be the size of the DAG of $t$ (i.e.\ the number of nodes and edges of the DAG of $t$). The sizes of other term-based expressions or data structures are defined as usual. For example, we define:
\begin{itemize}
\item the {\em size of a tuple $(t_1,\ldots,t_k)$} to be $size(t_1) + \ldots + size(t_k)$
\item the {\em size of a set of tuples} to be the sum of the sizes of those tuples
\item the {\em size of a substitution $\{x_1/t_1,\ldots,x_k/t_k\}$} to be $k + size(t_1) + \ldots + size(t_k)$
\item the {\em size of a node $v$ of a QSQ-net $\tuple{V,E,T,C}$} to be the sum of the sizes of the components of $C(v)$.
\end{itemize}

Using DAGs to represent terms, unification of two atoms $A$ and $A'$ can be done in polynomial time in the sizes of $A$ and $A'$. In the case $A$ and $A'$ are unifiable, the resulting atom and the resulting mgu have sizes that are polynomial in the sizes of $A$ and $A'$. Similarly, checking whether $A$ is an instance of $A'$ can also be done in polynomial time in the sizes of $A$ and~$A'$. 

The following theorem estimates the data complexity of Algorithm~\ref{alg: QSQN}, under the assumption that terms are represented by DAGs and unification and checking instances of atoms are done in polynomial time. 

\begin{theorem} \label{theorem: complexity}
For a fixed query and a fixed bound $l$ on term-depth, Algorithm~\ref{alg: QSQN} runs in polynomial time in the size of the \edb\ instance.
\end{theorem}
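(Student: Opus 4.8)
The plan is to fix the query $(P,q(\ovl{x}))$ and the bound $l$, write $n$ for the size of the \edb\ instance $I$, and establish four facts: every item stored in the net has size bounded by a constant, every node accumulates only polynomially many items, the main \textbf{while} loop executes only polynomially many times, and each iteration costs polynomial time. Since the topological structure $\tuple{V,E}$ depends only on $P$, the numbers $|V|$, $|E|$ and the maximal function arity $a$ are all constants; the only $I$-dependent quantity is the number $N$ of constant symbols occurring in the computation, which is $O(n)$.

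First I would prove a counting lemma. Because every stored term has term-depth at most $l$ (guaranteed by the term-depth checks in $\AddSubquery$ and in $\Transfer$) and every function symbol has arity at most $a$, the DAG of such a term has at most $a^0+a^1+\cdots+a^l = O(a^l)$ nodes, a constant independent of $I$; this is exactly why the exponential blow-up recalled just before the theorem cannot occur once only depth-$l$ items are retained. Hence each generalized tuple and each substitution stored in a node has constant structural size. Up to variants, such a tuple or subquery is determined by one of constantly many \emph{shapes} together with a choice, at each of its constantly many leaf positions, of a fixed function/predicate symbol or one of the $N=O(n)$ constants; so the number of distinct tuples and subqueries of term-depth $\le l$ up to variant is $O(n^{c})$ for a constant $c=O(a^l)$. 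As the algorithm keeps only most general items through $\AddTuple$ and $\AddSubquery$, each set $\tuples(v)$, $\subqueries(v)$, $\unprocessedSubqueries(v)$, $\unprocessedSQ(v)$, $\unprocessedTuples(v)$ contains at most polynomially many items, so every node has polynomial size.

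The step I expect to be the main obstacle is bounding the number of loop iterations, for which I would add a second structural observation: on items of term-depth $\le l$ the subsumption (``more general'') order has every strictly increasing chain of length bounded by a constant $h=O(a^l)$, again because each item has constant size. Now an item $X$ is deleted from a maintained set only when a strictly more general item is added; tracing successive re-additions of $X$ produces a strictly increasing chain of ever-more-general ``covering'' items, which has length $\le h$, so each distinct item (up to variant) is added at most $h+1=O(1)$ times. Combined with the polynomial bound on distinct items, the total number of additions to all $\tuples$, $\subqueries$, $\unprocessedSubqueries$, $\unprocessedSQ$ and $\unprocessedTuples$ sets over the whole run is polynomial; and since every tuple or subquery added to a main set is simultaneously placed in the corresponding unprocessed set, the total number of additions to unprocessed sets is polynomial too. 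Each loop iteration calls $\Fire$ on an active edge and empties its unprocessed set; by $\ActiveEdge$ that edge can become active again only after a fresh addition to its unprocessed set, so the number of $\Fire$ calls is at most the total number of such additions plus $|E|$, hence polynomial.

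Finally I would bound the cost of one iteration. Selecting an active edge scans the fixed set $E$, testing emptiness of polynomial-size sets. A single $\Fire$ iterates over a polynomial-size unprocessed set and, for each item, performs unifications against $I(p)$ (at most $n$ tuples) or against $\tuples(\ans{p})$ (polynomially many tuples), each unification or instance check being polynomial by the DAG assumption recalled before the theorem; the auxiliary routines $\AddSubquery$ and $\AddTuple$ merely scan polynomial-size sets. Thus each iteration runs in polynomial time, and multiplying by the polynomial number of iterations yields the claimed polynomial data complexity. Throughout, the term-depth bound $l$ is precisely what keeps item sizes constant, item counts polynomial, and subsumption chains short.
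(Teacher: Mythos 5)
Your proposal is correct and takes essentially the same route as the paper's own proof: bound the number of tuples and subqueries stored in each node polynomially in $n$ (using the term-depth bound $l$ together with the fact that items are added only when not subsumed and deleted only when subsumed by a newly added item), observe that each call of \Transfer, \Fire\ and \ActiveEdge\ then costs polynomial time, and note that an edge is fired only after a fresh addition to the corresponding set, so the loop iterates polynomially often. You in fact argue more carefully than the paper at two points --- the constant bound on DAG sizes of depth-$\le l$ terms and the constant bound on strictly-increasing subsumption chains, which controls re-additions of the same item --- and the only slip, that function symbols (not just constants) may also come from $I$ since the \edb\ instance is generalized, leaves your polynomial count unaffected.
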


\begin{proof}
Consider a run of Algorithm~\ref{alg: QSQN} using parameter $l$ on a query $(P,q(\ovl{x}))$ and on an \edb\ instance $I$ with size $n$. Here, $(P,q(\ovl{x}))$ and $l$ are fixed. Thus, for every $1 \leq i \leq m$, $n_i$ is bounded by a constant. Similarly, if $p$ is an \idb\ predicate from $P$ then the arity of $p$ is also bounded by a constant.

Observe that the number of tuples that are added to any set of the form $\tuples(\inp{p})$ or $\tuples(\ans{p})$ are bounded by a polynomial of $n$. The reasons are: 
\begin{itemize}
\item \idb\ predicates come from $P$
\item constant symbols and function symbols come from $P$ and $I$
\item $\tuples(\inp{p})$ and $\tuples(\ans{p})$ consist of tuples with term-depth bounded by~$l$ 
\item a tuple is added to a set of the form $\tuples(\inp{p})$ or $\tuples(\ans{p})$ only when it is not an instance of a fresh variant of any tuple from the set
\item a tuple is deleted from a set of the form $\tuples(\inp{p})$ or $\tuples(\ans{p})$ only when it is an instance of a new tuple added to the set.
\end{itemize}
For similar reasons, the number of subqueries that are added to any set of the form $\subqueries(v)$ are also bounded by a polynomial of $n$. 

Consequently, the sizes of sets of the form $\tuples(\inp{p})$, $\tuples(\ans{p})$, $\subqueries(v)$, $\unprocessed(v,w)$, $\unprocessedTuples(v)$, $\unprocessedSubqueries(v)$ or $\unprocessedSQ(v)$ are bounded by a polynomial of $n$. 
Therefore, the size of the constructed QSQ-net is bounded by a polynomial of $n$, and any execution of procedure $\Transfer$, procedure $\Fire$ or function $\ActiveEdge$ is done in polynomial time in~$n$. 

A transfer or a firing for an edge $(u,v)$ is done only when a new tuple was added to $\tuples(u)$ or a new subquery was added to $\subqueries(u)$. Thus, we can conclude that Algorithm~\ref{alg: QSQN} runs in polynomial time in $n$.
\myend
\end{proof}

\begin{corollary}\label{cor: HGDSJ}
Algorithm~\ref{alg: QSQN} with term-depth bound $l = 0$ is a complete evaluation algorithm with PTIME data complexity for the class of queries over a signature without function symbols.
\myend
\end{corollary}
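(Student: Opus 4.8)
The plan is to assemble this corollary entirely from results already proved: the complexity bound of Theorem~\ref{theorem: complexity}, the function-free completeness statement of Corollary~\ref{cor: comp}, the soundness of Theorem~\ref{theorem: sound}, and the completeness half of SLD-resolution in Theorem~\ref{theorem: SLD soundness and completeness}. The observation that drives the whole argument is that when the signature has no function symbols, every term is a constant or a variable, so the term-depth of every expression is $0$. Hence setting $l = 0$ imposes no real restriction: no atom, goal, substitution, or tuple occurring in any SLD-refutation, or in a run of Algorithm~\ref{alg: QSQN}, can ever exceed term-depth $0$, and in particular the side conditions on term-depth in Theorem~\ref{theorem: comp} and Corollary~\ref{cor: comp} are automatically satisfied.

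For the PTIME data complexity claim I would simply invoke Theorem~\ref{theorem: complexity} with the fixed bound $l = 0$. Since both the query $(P,q(\ovl{x}))$ and the bound are fixed, that theorem gives directly that Algorithm~\ref{alg: QSQN} runs in polynomial time in the size of the \edb\ instance, which is exactly the data complexity bound required here.

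The content lies in spelling out what \emph{complete evaluation algorithm} means and verifying it. Let $\tpc$ be an arbitrary correct answer, i.e.\ a tuple with $P \cup I \models \V(q(\tpc))$; the goal is to produce some $\ovl{s} \in \tuples(\ans{q})$ of which $\tpc$ is an instance of a variant. Choosing a substitution $\theta$ with $\ovl{x}\theta = \tpc$, this $\theta$ is a correct answer for $P \cup I \cup \{\gets q(\ovl{x})\}$. By the completeness half of Theorem~\ref{theorem: SLD soundness and completeness}, applied with the leftmost selection function, there are a computed answer $\delta$ and a substitution $\gamma$ with $q(\ovl{x})\theta = q(\ovl{x})\delta\gamma$, so $\tpc = \ovl{x}\delta\gamma$ is an instance of $\ovl{x}\delta$. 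Because no function symbols occur, the underlying SLD-refutation meets the hypotheses of Corollary~\ref{cor: comp}, which yields an $\ovl{s} \in \tuples(\ans{q})$ of which $\ovl{x}\delta$ is an instance of a variant. Composing the two instance relations gives that $\tpc$ is an instance of a variant of $\ovl{s}$, as desired. Together with Theorem~\ref{theorem: sound}, which guarantees that every tuple in $\tuples(\ans{q})$ is itself a correct answer, this establishes that $\tuples(\ans{q})$ is a sound and complete (up to subsumption by variants) representation of the set of correct answers.

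The nearest thing to an obstacle is purely a matter of careful bookkeeping rather than genuine difficulty: one must check that the two subsumption relations compose, i.e.\ that ``$\tpc$ is an instance of $\ovl{x}\delta$'' and ``$\ovl{x}\delta$ is an instance of a variant of $\ovl{s}$'' together give ``$\tpc$ is an instance of a variant of $\ovl{s}$''. This is a routine transitivity argument, but I would state explicitly that a variant of a variant is again a variant and that the witnessing substitutions compose into a single instantiating substitution, so that the final conclusion has exactly the form claimed in the statement.
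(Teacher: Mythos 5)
Your proposal is correct and follows essentially the same route as the paper, which derives the corollary directly from Theorem~\ref{theorem: sound}, Corollary~\ref{cor: comp} and Theorem~\ref{theorem: complexity}, with the link to correct answers supplied by the completeness half of Theorem~\ref{theorem: SLD soundness and completeness} exactly as you describe. You merely spell out the details the paper leaves implicit (that term-depth $0$ is automatic without function symbols, and that the two instance-of-a-variant relations compose), and these verifications are sound.
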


This corollary follows from Theorem~\ref{theorem: sound} (on soundness), Corollary~\ref{cor: comp} (on completeness) and the above theorem (on data complexity).


\section{QSQ-Nets with Tail Recursion Elimination}
\label{section: TRE}

A {\em query-subquery net structure with tail recursion elimination} (in short, {\em \QSQTRE-net structure}) of $P$ is a tuple $\tuple{V,E,T}$ defined similarly to a QSQ-net structure of $P$, but with the following differences:
\begin{itemize}
\item $T$ is a function, called the {\em type} of the net structure, mapping 
  \begin{itemize}
  \item each $\filter_{i,j} \in V$ such that the predicate of $B_{i,j}$ is extensional to {\em true} or {\em false}
  \item each \idb predicate to {\em true} or {\em false}. 
  \end{itemize}

\item If $A_i$ and $B_{i,n_i}$ have the same \idb predicate $p$ with $T(p) = true$ then $V$ does not contain the node $\postFilter_i$ and $E$ does not contain the edges $(\filter_{i,n_i},\postFilter_i)$, $(\postFilter_i,\ans{p})$ and $(\ans{p},filter_{i,n_i})$. 
\end{itemize}
The function $T$ can thus be called a memorizing type for extensional nodes $\filter_{i,j}$ (as in QSQ-net structures), and a tail-recursion-elimination type for intensional predicates. 

\begin{example}\label{example: HGDSB} 

Reconsider the positive logic program given in Example~\ref{example: HGDSA}:
\[
\begin{array}{l}
p(x,y) \gets q(x,y) \\
p(x,y) \gets q(x,z), p(z,y).
\end{array}
\]
A \QSQTRE-net structure $\tuple{V,E,T}$ of this program with $T(p) = true$ has the topological structure illustrated in Figure~\ref{fig: QSQTRE-net example}, which is like a loop.
\myend
\end{example}

\begin{figure}
\begin{footnotesize}
\begin{center}
\begin{tabular}{c}
\xymatrix{
&
*+[F]{\preFilter_1}
\ar@{->}[r]
&
*+[F]{\filter_{1,1}}
\ar@{->}[r]
&
*+[F]{\postFilter_1}
\ar@{->}[rrd]
\\
*+[F]{\inp{p}}
\ar@{->}[ru]
\ar@{->}[rd]
& & & & & 
*+[F]{\ans{p}}
\\
&
*+[F]{\preFilter_2}
\ar@{->}[r]
&
*+[F]{\filter_{2,1}}
\ar@{->}[r]
&
*+[F]{\filter_{2,2}}
\ar@/_{1.3pc}/@{->}[lllu]
} 
\end{tabular}
\end{center}
\end{footnotesize}
\caption{An illustration for Example~\ref{example: HGDSB}.
\label{fig: QSQTRE-net example}
}
\end{figure}
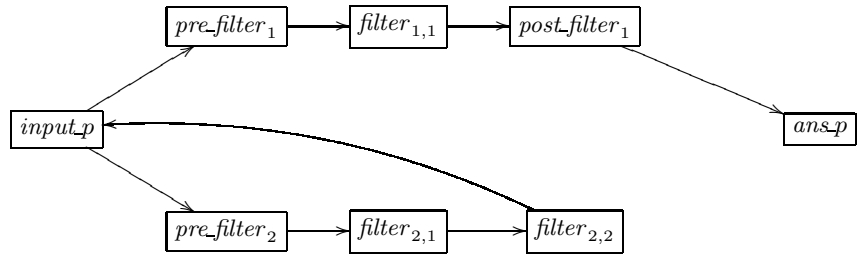

A {\em query-subquery net with tail recursion elimination} (in short, {\em \QSQTRE-net}) of $P$ is a tuple $N = \tuple{V,E,T,C}$ defined similarly to a QSQ-net of $P$, but with the following differences:
\begin{itemize}
\item $\tuple{V,E,T}$ is a \QSQTRE-net structure of $P$ 
\item if $v = \inp{p}$ and $T(p) = true$ then $C(v)$ consists of:
  \begin{itemize}
  \item $\tuplePairs(v)\,$: a set of pairs of generalized tuples of the same arity as $p$
  \item $\unprocessed(v,w)$ for $(v,w) \in E$: a subset of $\tuplePairs(v)$
  \end{itemize}
\item if $v = \filter_{i,n_i}$, $\kind(v) = \idb$, $\pred(v) = p$ and $T(p) = true$ then the structure $C(v)$ does not contain $\unprocessedSubqueries(v)$ and $\unprocessedTuples(v)$.
\end{itemize}

\begin{figure*}
\begin{procedure}[H]
\caption{transfer2($D,u,v$)\label{proc: TransferS}} 
\GlobalData{a Horn knowledge base $\tuple{P,I}$, a \QSQTRE-net $N = \tuple{V,E,T,C}$ of $P$, and a~term-depth bound $l$.}
\Input{data $D$ to transfer through the edge $(u,v) \in E$.}

\BlankLine

\lIf{$D = \emptyset$}{\Return}\;


\uIf{$u$ is $\inp{p}$ and $T(p) = true$}{
$\Gamma := \emptyset$\;
\ForEach{$(\tpc,\tpc') \in D$}{
  \If{$p(\tpc)$ and $\atom(v)$ are unifiable by an mgu $\gamma$}{
  	$\AddSubquery(\tpc'\gamma, \gamma_{|\postVars(v)}, \Gamma, succ(v))$
  }
}
$\Transfer(\Gamma,v,succ(v))$
}
\uElseIf{$v$ is $\inp{p}$ and $T(p) = true$}{
  \ForEach{$(\tpc,\tpc') \in D$}{
     let $(\tpc_2,\tpc'_2)$ be a fresh variant of $(\tpc,\tpc')$\;
     \If{$(\tpc_2,\tpc'_2)$ is not an instance of any pair from $\tuplePairs(v)$}{
	\ForEach{$(\tpc_3,\tpc'_3) \in \tuplePairs(v)$}{
	   \If{$(\tpc_3,\tpc'_3)$ is an instance of $(\tpc_2,\tpc'_2)$}{
	      delete $(\tpc_3,\tpc'_3)$ from $\tuplePairs(v)$\;
	      \lForEach{$\tuple{v,w} \in E$}{delete $(\tpc_3,\tpc'_3)$ from $\unprocessed(v,w)$}
	   }
	}
	add $(\tpc_2,\tpc'_2)$ to $\tuplePairs(v)$\;
	\lForEach{$\tuple{v,w} \in E$}{add $(\tpc_2,\tpc'_2)$ to $\unprocessed(v,w)$}
     }
  }
}
\uElseIf{$v$ is $\filter_{i,n_i}$, $\kind(v) = \idb$, $\pred(v) = p$ and $T(p) = true$}{
  \ForEach{$(\tpc,\delta) \in D$}{
     \If{$\termDepth(\atom(v)\delta) \leq l$}{
        \If{no subquery in $\subqueries(v)$ is more general than $(\tpc,\delta)$}{
           delete from $\subqueries(v)$ all subqueries less general than $(\tpc,\delta)$\;
           delete from $\unprocessedSQ(v)$ all subqueries less general than $(\tpc,\delta)$\;
	   add $(\tpc,\delta)$ to both $\subqueries(v)$ and $\unprocessedSQ(v)$
        }
     }
  }
}
\Else{
Steps~\ref{transfer: step FDWES}-\ref{transfer: step FDWET} of procedure $\Transfer$ (given on page~\pageref{proc: Transfer}) with the recursive calls of $\Transfer$ replaced by calls of $\TransferS$
}
\end{procedure}

\smallskip

\begin{procedure}[H]
\caption{add-tuple-pair($\tpc,\tpc',\Gamma$)\label{proc: AddTuplePair}} 
\Purpose{add the pair of tuples $(\tpc,\tpc')$ to $\Gamma$, but keep in $\Gamma$ only the most general pairs.}
\BlankLine

let $(\tpc_2,\tpc'_2)$ be a fresh variant of $(\tpc,\tpc')$\;
\If{$(\tpc_2,\tpc'_2)$ is not an instance of any pair from $\Gamma$}{
   delete from $\Gamma$ all pairs that are instances of $(\tpc_2,\tpc'_2)$\;
   add $(\tpc_2,\tpc'_2)$ to $\Gamma$
}
\end{procedure}

\smallskip

\begin{procedure}[H]
\caption{compute-gamma()\label{proc: ComputeGamma}} 
\Purpose{a macro used in procedure $\FireS$ for replacing Step~\ref{fire: Step HGSAA} of procedure $\Fire$.}
\BlankLine

\uIf{$T(p) = false$}{
\lForEach{$(\tpc,\delta) \in \unprocessedSQ(u)$}{let $p(\tpc') = \atom(u)\delta$, $\AddTuple(\tpc',\Gamma)$}
}
\uElseIf{$j < n_i$}{
\lForEach{$(\tpc,\delta) \in \unprocessedSQ(u)$}{let $p(\tpc') = \atom(u)\delta$, $\AddTuplePair(\tpc',\tpc',\Gamma)$}
}
\Else{
\lForEach{$(\tpc,\delta) \in \unprocessedSQ(u)$}{let $p(\tpc') = \atom(u)\delta$, $\AddTuplePair(\tpc',\tpc,\Gamma)$}
}

\end{procedure}
\end{figure*}

The intuition behinds a pair $(\tpc,\tpc') \in \tuplePairs(\inp{p})$ is that: 
\begin{itemize}
\item $\tpc$ is a usual input tuple for $p$, but the intended goal at a higher level is $\gets p(\tpc')$
\item any correct answer for $P \cup I \cup \{\gets p(\tpc)\}$ is also a correct answer for $P \cup I \cup \{\gets p(\tpc')\}$
\item if a substitution $\theta$ is a computed answer of $P \cup I \cup \{\gets p(\tpc)\}$ then we will store in $\ans{p}$ the tuple $\tpc'\theta$ instead of $\tpc\theta$. 
\end{itemize}

Data transferred through an edge of the form $(\inp{p},v)$ or $(v,\inp{p})$ in a \QSQTRE-net $\tuple{V,E,T,C}$, where $p$ is an \idb predicate with $T(p) = true$, is redefined to be a finite set of pairs of generalized tuples of the same arity as $p$. 

We say that a tuple pair $(\tpc,\tpc')$ is {\em more general} than $(\tpc_2,\tpc'_2)$, and $(\tpc_2,\tpc'_2)$ is an {\em instance} of $(\tpc,\tpc')$, if there exists a substitution $\theta$ such that $(\tpc,\tpc')\theta = (\tpc_2,\tpc'_2)$.

Other notions for \QSQTRE-nets are defined similarly as for QSQ-nets.

\begin{figure*}[t]
\LinesNumberedHidden
\begin{algorithm}[H]
\caption{for evaluating a query $(P,q(\ovl{x}))$ on an \edb\ instance $I$.\label{alg: QSQTRE}}

let $\tuple{V,E,T}$ be a \QSQTRE-net structure of $P$\tcp*{$T$ can be chosen arbitrarily}

set $C$ so that $N = \tuple{V,E,T,C}$ is an empty \QSQTRE-net of $P$\;

\BlankLine

let $\ovl{x}'$ be a fresh variant of $\ovl{x}$\;

\uIf{$T(q) = false$}{
  $\tuples(\inp{q}) := \{\ovl{x}'\}$\;
  \lForEach{$(\inp{q},v) \in E$}{$\unprocessed(\inp{q},v) := \{\ovl{x}'\}$}
}
\Else{
  $\tuplePairs(\inp{q}) := \{(\ovl{x}',\ovl{x}')\}$\;
  \lForEach{$(\inp{q},v) \in E$}{$\unprocessed(\inp{q},v) := \{(\ovl{x}',\ovl{x}')\}$}
}

\BlankLine

\While{there exists $(u,v) \in E$ such that $\ActiveEdge(u,v)$ holds}{
  select $(u,v) \in E$ such that $\ActiveEdge(u,v)$ holds\;
  \tcp{any strategy is acceptable for the above selection}
  $\FireS(u,v)$
}

\BlankLine
\Return $\tuples(\ans{q})$
\end{algorithm}
\end{figure*}

Procedure $\TransferS(D,u,v)$ (given on page~\pageref{proc: TransferS}) is a modified version of $\Transfer(D,u,v)$ for dealing with tail recursion elimination.

Let procedure $\FireS(u,v)$ be the modified version of $\Fire(u,v)$ obtained by:
\begin{itemize}
\item changing the calls of $\Transfer$ by calls of $\TransferS$ (with the same parameters)
\item replacing Step~\ref{fire: Step HGSAA} by macro $\ComputeGamma$ defined on page~\pageref{proc: TransferS}.
\end{itemize}

Algorithm~\ref{alg: QSQTRE} (given on page~\pageref{alg: QSQTRE}) is our reformulation of Algorithm~\ref{alg: QSQN} by using \QSQTRE-nets for evaluating queries. 

\begin{theorem}
Theorems~\ref{theorem: sound}, \ref{theorem: comp}, \ref{theorem: complexity} and Corollaries~\ref{cor: comp}, \ref{cor: HGDSJ} still hold when ``Algorithm~\ref{alg: QSQN}'' is replaced by ``Algorithm~\ref{alg: QSQTRE}''.
\myend
\end{theorem}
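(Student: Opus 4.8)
The plan is to revisit the proofs of Theorems~\ref{theorem: sound}, \ref{theorem: comp} and~\ref{theorem: complexity} and check that each survives the three local changes that distinguish a \QSQTRE-net from a QSQ-net: the replacement of an input tuple $\tpc$ at $\inp{p}$ by a pair $(\tpc,\tpc')$ when $T(p)=true$, the removal of $\postFilter_i$ together with its incident edges for a tail-recursive clause, and the feedback edge from $\filter_{i,n_i}$ to $\inp{p}$ realized by the macro \ComputeGamma\ inside \FireS. Everything hinges on one semantic invariant, already stated informally before Algorithm~\ref{alg: QSQTRE}: for every pair $(\tpc,\tpc')$ ever added to $\tuplePairs(\inp{p})$ and every substitution $\theta$, if $P\cup I\models\V(p(\tpc\theta))$ then $P\cup I\models\V(p(\tpc'\theta))$. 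I would first prove this invariant by induction on the step at which the pair is created. The initial pair $(\ovl{x}',\ovl{x}')$ has equal components and is trivial; a pair produced by \ComputeGamma\ at a non-tail position ($j<n_i$) is of the form $(\tpc',\tpc')$ and is again trivial; and a pair $(\tpc',\tpc)$ produced at the tail position ($j=n_i$) carries the head-tuple $\tpc'$ with $p(\tpc')=\atom(\filter_{i,n_i})\delta$ as its first component and the accumulated answer-tuple $\tpc$ of the originating subquery as its second component, so the invariant follows from $P\cup I\models\V(\varphi_i)$ exactly as in the base case of the inner induction in the proof of Theorem~\ref{theorem: sound}.

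For soundness, the key observation is that the two components of a pair flow through identical sequences of most general unifiers. Indeed, once a pair $(\tpc,\tpc')$ is drawn from $\tuplePairs(\inp{p})$ in \TransferS, the head unification $\gamma=mgu(p(\tpc),A_i)$ and all subsequent body mgu's depend only on the first component, on the substitution $\delta$, and on the tuples available in $I$ and in the answer relations; the second component is merely instantiated passively and stored. Hence an answer tuple $\tpc'\theta$ placed into $\tuples(\ans{p})$ is the passive image, under the very substitution $\theta$, of an ordinary answer $\tpc\theta$ for the first component. The inductive argument establishing Theorem~\ref{theorem: sound} applies verbatim to that first component and yields $P\cup I\models\V(p(\tpc\theta))$, and the pair invariant then transfers soundness to $P\cup I\models\V(p(\tpc'\theta))$. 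Clauses whose tail atom is not eliminated are handled word for word by the original argument, so this settles soundness.

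For completeness, I would restate Lemma~\ref{lemma: comp1} for pairs: for every \idb\ predicate $r$ with $T(r)=true$, every $(\ovl{s},\ovl{s}')\in\tuplePairs(\inp{r})$, and every leftmost SLD-refutation of $P\cup I\cup\{\gets r(\ovl{s})\}$ of term-depth at most $l$ with computed answer $\theta$ such that $\ovl{s}'\theta$ has term-depth at most $l$, there is $\ovl{s}''\in\tuples(\ans{r})$ with $\ovl{s}'\theta$ an instance of a variant of $\ovl{s}''$ (the case $T(r)=false$ reduces to the original lemma by reading $\ovl{s}=\ovl{s}'$, and the initial pair $(\ovl{x}',\ovl{x}')$ recovers Theorem~\ref{theorem: comp}). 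The induction is again on the length of the refutation, and Remark~\ref{remark: HGDSX} carries over unchanged for every body atom except the eliminated tail atom $B_{i,n_i}$. The genuinely new case is when the selected clause $\varphi_i$ is tail-recursive: after processing $B_{i,1},\ldots,B_{i,n_i-1}$ the computation reaches $\filter_{i,n_i}$ with a subquery whose answer-slot is the accumulated tuple $\tpc$, and instead of cycling through $\postFilter_i$ and $\ans{r}$, \ComputeGamma\ installs the pair $(\tpc',\tpc)$ into $\tuplePairs(\inp{r})$, where $p(\tpc')=\atom(\filter_{i,n_i})\delta$ matches the pending tail subgoal. The refutation of that tail subgoal is a proper suffix of the given refutation, hence strictly shorter, so the inductive hypothesis applies to the installed pair and yields in $\ans{r}$ a tuple subsuming its second component $\tpc$ instantiated by the remaining answer substitution; a Lifting-Lemma argument identical to the \idb\ case of Remark~\ref{remark: HGDSX} then shows this tuple subsumes $\ovl{s}'\theta$. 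Theorem~\ref{theorem: comp} and Corollary~\ref{cor: comp} follow from this lemma exactly as in the original development.

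The main obstacle is precisely this tail-recursion case of the completeness induction: one must verify both that \ComputeGamma\ and \TransferS\ deliver the correct pair $(\tpc',\tpc)$ to $\inp{r}$ and that the induction remains well-founded, which is guaranteed because the tail subgoal's refutation is a strict sub-refutation of the given one. Data complexity (Theorem~\ref{theorem: complexity}) needs no new idea: pairs of tuples over the fixed signature with term-depth at most $l$ are polynomially many in the size of $I$, so the sizes of all $\tuplePairs(v)$, $\subqueries(v)$, $\tuples(\ans{p})$ and the associated $\unprocessed$ sets remain polynomially bounded, and each execution of \TransferS, \FireS\ and \ActiveEdge\ runs in polynomial time, exactly as in the original counting argument. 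Corollary~\ref{cor: HGDSJ} then follows from soundness, completeness and this bound just as before.
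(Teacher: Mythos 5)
The paper states this theorem with no proof at all (it is asserted with an end-of-statement marker and left to the reader), so there is no official argument to compare against; judged on its own, your reconstruction is essentially the argument the paper's development implies, and its ingredients are right. You correctly identify the semantic invariant for pairs in $\tuplePairs(\inp{p})$ as the crux, observe that pairs installed by \ComputeGamma{} at non-tail positions are diagonal and hence trivially satisfy it, note that the answer slot of a subquery is only \emph{passively} instantiated (the mgu's computed at filters involve $\atom(v)\delta$ and tuples from $I$ or $\tuples(\ans{p})$, never the slot, so the same unifiers arise as in the untransformed run), restate Lemma~\ref{lemma: comp1} for pairs with the only new case being the eliminated tail atom, close the induction off by the strict-suffix observation, and dispose of Theorem~\ref{theorem: complexity} by the same polynomial counting with $\tuplePairs$ in place of $\tuples$. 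The handling of subsumption in the completeness case (the IH must be applied to the fresh-variant, possibly strictly more general pair actually retained in $\tuplePairs(\inp{r})$, reached via the Lifting Lemma) is only gestured at, but it mirrors verbatim the step in the paper's proof of Lemma~\ref{lemma: comp1} where ``some tuple more general than $\tpc'$'' is located in $\tuples(\inp{p})$, so that is acceptable at sketch level.

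One step of your plan is wrong as literally stated, though repairable within your own framework. You claim the invariant for a tail pair $(\tpc',\tpc)$ ``follows from $P\cup I\models\V(\varphi_i)$ exactly as in the base case of the inner induction in the proof of Theorem~\ref{theorem: sound}.'' It does not: for a pair installed at $\filter_{i,n_i}$ from a subquery $(\tpc,\delta)$, the invariant is precisely statement \eqref{eq: GHREX} at $j=n_i$, whose proof consumes the \emph{entire} inner induction along $B_{i,1},\ldots,B_{i,n_i-1}$, including the correctness of the answer tuples drawn from $\tuples(\ans{p_j})$ at intermediate intensional filters --- and in the \QSQTRE{} setting those answers are themselves correct only by virtue of the invariant for pairs created still earlier. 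So the sequential plan ``first prove the invariant, then soundness'' is circular if taken at face value. The repair is the one your induction-on-creation-step already suggests: a single simultaneous induction on the global step number establishing together (a) every pair ever placed in a set $\tuplePairs(\inp{p})$ satisfies the invariant and (b) every tuple ever placed in a set $\tuples(\ans{p})$ is a correct answer, each event depending only on strictly earlier events. With that restructuring, and with the depth bookkeeping for second components checked (they enter subqueries through \AddSubquery, which enforces the term-depth bound, so your hypothesis ``$\ovl{s}'\theta$ of term-depth at most $l$'' suffices), your proof goes through.
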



\section{Control Strategies}
\label{section: strategies}

Recall that in Algorithms~\ref{alg: QSQN} and \ref{alg: QSQTRE} we repeatedly select an active edge and fire the operation for it. Such selection is decided by the adopted control strategy, which can be arbitrary. 
In this section we describe two control strategies: the first one is to reduce the number of accesses to the secondary storage, while the second one is depth-first search, which gives priority to the order of clauses in the positive logic program defining intensional predicates and thus allows the user to control the evaluation to a certain extent. 

\subsection{Reducing the Number of Accesses to the Secondary Storage}

It is reasonable to assume that the computer memory is not large enough to load the whole extensional instance of the knowledge base into it and evaluation of queries cannot usually be done totally in the computer memory. Note that, not only extensional relations may be too large, but temporary relations used for computing intensional predicates like $\tuples(v)$, $\unprocessed(v,w)$, $\subqueries(v)$, \ldots\ may also be too large. Therefore, sometimes we have to load a relation into the computer memory, and sometimes we have to unload a relation to the secondary storage. As access to the secondary storage is time-consuming, it is desirable to reduce the total number of such accesses. Here is a strategy for this:
\begin{itemize}
\item If $(u,v)$ and $(u',v')$ are active edges of the considered QSQ-net/\QSQTRE-net and firing the edge $(u,v)$ can be done in the computer memory, while firing the edge $(u',v')$ requires loading some relations from the secondary storage then the edge $(u,v)$ has a higher priority than $(u',v')$ (for being selected).  
\item If firing any of edges $(u,v)$ and $(u',v')$ can be done in the computer memory then:
  \begin{itemize}
  \item the one that could enable a next operation be done in the computer memory (e.g.\ firing some edge $(v,w)$ or $(v',w')$) is considered to have a higher priority than the other
  \item if both of the edges are equal w.r.t.\ the above criterion then the one that could enable more next operations be done in the computer memory is considered to have a higher priority than the other
  \item if both of the edges are equal w.r.t.\ the above criteria then the one that processes more tuples/subqueries is considered to have a higher priority than the other.
  \end{itemize}

\item If no more operations can be done in the computer memory without loading relations from the secondary storage then select and load such a relation. The criteria for such selection are similar to the above mentioned ones. That is, we choose a relation to load into the computer memory that would enable more next operations be done in the computer memory and that would process more tuples/subqueries. 

\item If we want to load a relation into the computer memory but there is not enough available space in it then we have to select and unload an in-memory relation to the secondary storage. We can choose the in-memory relation that has not been used in the longest period to unload.
\end{itemize}

\subsection{Depth-First Evaluation}

The user may use Prolog programming style to specify the positive logic program defining intensional predicates. In such cases, e.g.\ as in Example~\ref{example1}, the order of the program clauses may be essential and depth-first search may increase efficiency of query evaluation. 

For each node of the considered QSQ-net/\QSQTRE-net we maintain and update its modification timestamp. For the depth-first evaluation approach, nodes are considered in the decreasing order of modification timestamps. When a node $v$ is considered, we choose an active edge $(v,w)$ to fire. If there is no such an edge, a next node in the mentioned order is chosen for consideration. If there are more than one successor $w$ of $v$ such that the edge $(v,w)$ is active, choose an edge $(v,w)$ according to the following strategy:
\begin{itemize}
\item If $v = \inp{p}$ then $w$ is the node $\preFilter_i$ with the smallest index $i$ such that $(v,w)$ is active (i.e.\ we consider the program clause $\varphi_i$ with the smallest index $i$ such that the edge $(v,\preFilter_i)$ is active). 

\item If $v = \filter_{i,j}$, $\kind(v) = \idb$, $v$ has two successors and both the edges $(v,succ(v))$ and $(v,succ_2(v))$ are active, then $w = succ(v)$.\footnote{If the net is with tail recursion elimination then $v$ may have only one successor.} 

\item If $v = \ans{p}$ then $w$ is the successor of $v$ with the biggest modification timestamp such that $(v,w)$ is active.
\end{itemize}


\section{Conclusions}
\label{section: conc}

We have provided the first framework for developing algorithms for evaluating queries to Horn knowledge bases with the properties that: 
the approach is goal-directed; each subquery is processed only once and each supplement tuple, if desired\footnote{when $T(v) = true$ for all nodes $v$ of the form $\filter_{i,j}$ with $\kind(v) = \edb$}, is transferred only once; operations are done set-at-a-time; and any control strategy can be used. 
The framework forms a generic evaluation method called QSQN. 
We have proved soundness and completeness of our generic evaluation method and showed that, when the term-depth bound is fixed, the method has PTIME data complexity.

This work is a continuation of~\cite{ToCL455}. It makes essential improvements: while the QSQR evaluation method of~\cite{ToCL455} uses iterative deepening search and does redundant recomputations, the QSQN evaluation method developed in this paper allows any control strategy and does not do redundant recomputations. The QSQN evaluation method is much more flexible, e.g., for reducing the number of accesses to the secondary storage.  

Our framework is an adaptation and a generalization of the QSQ approach of Datalog for Horn knowledge bases. One of the key differences is that we do not use adornments and annotations, but use substitutions instead. This is natural for the case with function symbols and without the range-restrictedness condition. When restricting to Datalog queries, it groups operations on the same relation together regardless of adornments and allows to reduce the number of accesses to the secondary storage although ``joins'' and ``projections'' would be more complicated. QSQ-nets are a more intuitive representation than the description of the QSQ approach of Datalog given in~\cite{AHV95}. Our notion of QSQ-net makes a connection to flow networks and is intuitive for developing efficient evaluation algorithms. For example, as shown in the paper, it is easy to incorporate tail recursion elimination into QSQ-nets, and as a result we have \QSQTRE-nets. 

In comparison with the most well-known evaluation methods, our QSQN evaluation method is more efficient than the QSQR evaluation method (as it does not do redundant recomputations) and is more flexible and thus has essential advantages over the bottom-up evaluation method based on magic-set transformation and improved seminaive evaluation (as shown in Example~\ref{example1}). 




\end{document}